\documentclass[sigplan,10pt]{acmart}
\settopmatter{printfolios=true,printccs=false,printacmref=false}

\acmConference[PL'18]{ACM SIGPLAN Conference on Programming Languages}{January 01--03, 2018}{New York, NY, USA}
\acmYear{2018}
\acmISBN{} 
\acmDOI{} 
\startPage{1}

\setcopyright{none}

\bibliographystyle{ACM-Reference-Format}


\makeatletter                   
\def\mdseries@tt{m}             
\makeatother                    

\usepackage{booktabs}        
\usepackage{subcaption}      

\usepackage{threeparttable}  
\usepackage{graphicx}        
\usepackage{semantic}        
\usepackage{algorithm}       
\usepackage{algpseudocode}   
\usepackage{cleveref}        
\usepackage{xcolor}          
\usepackage{multicol}

\usepackage{microtype}
\usepackage[subtle]{savetrees}

\newcommand{\Projname}{\textsc{Castor}}
\newcommand{\projname}{\textsc{Castor}}
\newcommand{\langname}{layout algebra}

\newcommand{\papertitle}{Deductive Optimization of Relational Data Storage}
\newcommand{\demomatch}{\textsc{Demo\-Match}}
\newcommand{\hyper}{\textsc{Hyper}}
\newcommand{\cozy}{\textsc{Cozy}}
\newcommand{\chestnut}{\textsc{Chestnut}}

\newcommand{\concat}{\ensuremath{\mathbin{+\!\!+}}}

\newcommand{\tuple}[2]{\textsf{\textbf{tuple}}_{#2}(#1)}
\newcommand{\ctuple}[1]{\tuple{#1}{\textsf{cross}}}

\newcommand{\static}[1]{\boxed{#1}}
\newcommand{\partition}[3]{\textsc{part} (#1, #2, #3)}

\newcommand{\cxform}{\xrightarrow{t}}

\newcommand{\todo}[1]{}
\renewcommand{\todo}[1]{{\color{red} TODO: {#1}}}

\newcommand{\as}[2]{#1\ \textsf{as}\ #2}
\newcommand{\filter}{\textsf{filter}}
\newcommand{\join}{\textsf{join}}
\newcommand{\depjoin}{\textsf{depjoin}}
\newcommand{\select}{\textsf{select}}
\newcommand{\dedup}{\textsf{dedup}}
\newcommand{\groupby}{\textsf{group-by}}
\newcommand{\orderby}{\textsf{ord\-er-by}}
\newcommand{\scalar}[1]{\textsf{\textbf{scalar}}(#1)}
\newcommand{\nscalar}{\textsf{\textbf{scalar}}}
\newcommand{\alist}{\textsf{\textbf{list}}}
\newcommand{\atuple}[1]{\textsf{\textbf{tuple}}_{#1}}
\newcommand{\ntuple}{\textsf{\textbf{tuple}}}
\newcommand{\hidx}{\textsf{\textbf{hash-idx}}}
\newcommand{\oidx}{\textsf{\textbf{order\-ed-idx}}}
\newcommand{\ite}[3]{\textsf{if}\ #1\ \textsf{then}\ #2\ \textsf{else}\ #3}

\newcommand{\layoutto}{\downarrow}

\newcommand{\ccol}[1]{\omit\hfill #1\hfill}

\newcommand{\stitle}[1]{\vspace{1ex}\noindent{\bf #1}}

\newcommand{\evalto}{\Downarrow}
\newcommand{\eto}[3][\sigma, \delta]{\ensuremath{#1 |- #2 \Downarrow #3}}

\hyphenation{da-ta-bas-es}
\hyphenation{me-ga-bytes}
\hyphenation{re-lat-ion-al}

\Crefname{section}{Sec.}{Sections}

\makeatletter
\def\@copyrightspace{\relax}
\makeatother

\begin{document}

\title{\papertitle}         



\author{John K. Feser}
\orcid{0000-0001-8577-1784}             
\affiliation{
  \department{CSAIL}              
  \institution{MIT}            
  \streetaddress{32 Vassar St.}
  \city{Cambridge}
  \state{MA}
  \postcode{02139}
  \country{USA}                    
}
\email{feser@csail.mit.edu}          

\author{Samuel Madden}
\affiliation{
  \department{CSAIL}              
  \institution{MIT}            
  \streetaddress{32 Vassar St.}
  \city{Cambridge}
  \state{MA}
  \postcode{02139}
  \country{USA}                    
}
\email{madden@csail.mit.edu}          

\author{Nan Tang}
\affiliation{
  \department{Qatar Computing Research Institute}             
  \institution{QCRI HBKU}           
   \city{Doha}
  \country{Qatar}                   
}
\email{ntang@hbku.edu.qa}         

\author{Armando Solar-Lezama}
\affiliation{
  \department{CSAIL}              
  \institution{MIT}            
  \streetaddress{32 Vassar St.}
  \city{Cambridge}
  \state{MA}
  \postcode{02139}
  \country{USA}                    
}
\email{asolar@csail.mit.edu}          

\begin{abstract}
  Optimizing the physical data storage and retrieval of data are two key
  database management problems. In this paper, we propose a language that can
  express a wide range of physical database layouts, going well beyond the row-
  and column-based methods that are widely used in database management systems.
  We use deductive synthesis to turn a high-level relational representation of a
  database query into a highly optimized low-level implementation which operates
  on a specialized layout of the dataset. We build a compiler for this language
  and conduct experiments using a popular database benchmark, which shows that
  the performance of these specialized queries is competitive with a
  state-of-the-art in memory compiled database system.

\end{abstract}

\begin{CCSXML}
<ccs2012>
<concept>
<concept_id>10011007.10011006.10011008</concept_id>
<concept_desc>Software and its engineering~General programming languages</concept_desc>
<concept_significance>500</concept_significance>
</concept>
<concept>
<concept_id>10003456.10003457.10003521.10003525</concept_id>
<concept_desc>Social and professional topics~History of programming languages</concept_desc>
<concept_significance>300</concept_significance>
</concept>
</ccs2012>
\end{CCSXML}

\ccsdesc[500]{Software and its engineering~General programming languages}
\ccsdesc[300]{Social and professional topics~History of programming languages}


\maketitle

\section{Introduction}

Traditional database systems are generic and powerful, but they are not well optimized for static databases.
A static database is one where the data changes slowly or not at all and the queries are fixed.
These two constraints introduce opportunities for aggressive optimization and specialization.
This paper introduces \projname{}: a domain specific language and compiler for building static databases.
\projname{} achieves high performance by combining query compilation techniques from state-of-the-art in-memory databases~\cite{Neumann2011} with a deductive synthesis approach for generating specialized data structures.

To better understand the scenarios that \projname{} supports, consider these two use cases.
First, consider a company which maintains a web dashboard for displaying internal analytics from data that is aggregated nightly.
The queries used to construct the dashboard cannot be precomputed directly, because they contain parameters like dates or customer IDs, but there are only a few query templates.
Not all the data in the original database is needed, and some attributes are only used in aggregates.
As another example, consider a company which is shipping a GPS device that contains an embedded map. The map data is infrequently updated, and the device queries it in only a few specific ways.
The GPS manufacturer cares more about compactness and efficiency than about generality.
As with the company building the dashboard, it is desirable to produce a system that is optimal for the particular dataset to be stored. 

These two companies could use a traditional database system, but using a system designed to support arbitrary queries will miss important optimization opportunities.
Alternatively, they could write a program using custom data structures.
This will give them tight control over their data layout and query implementation but will be difficult to develop and expensive to maintain.

\begin{figure}
  \includegraphics[width=0.7\columnwidth]{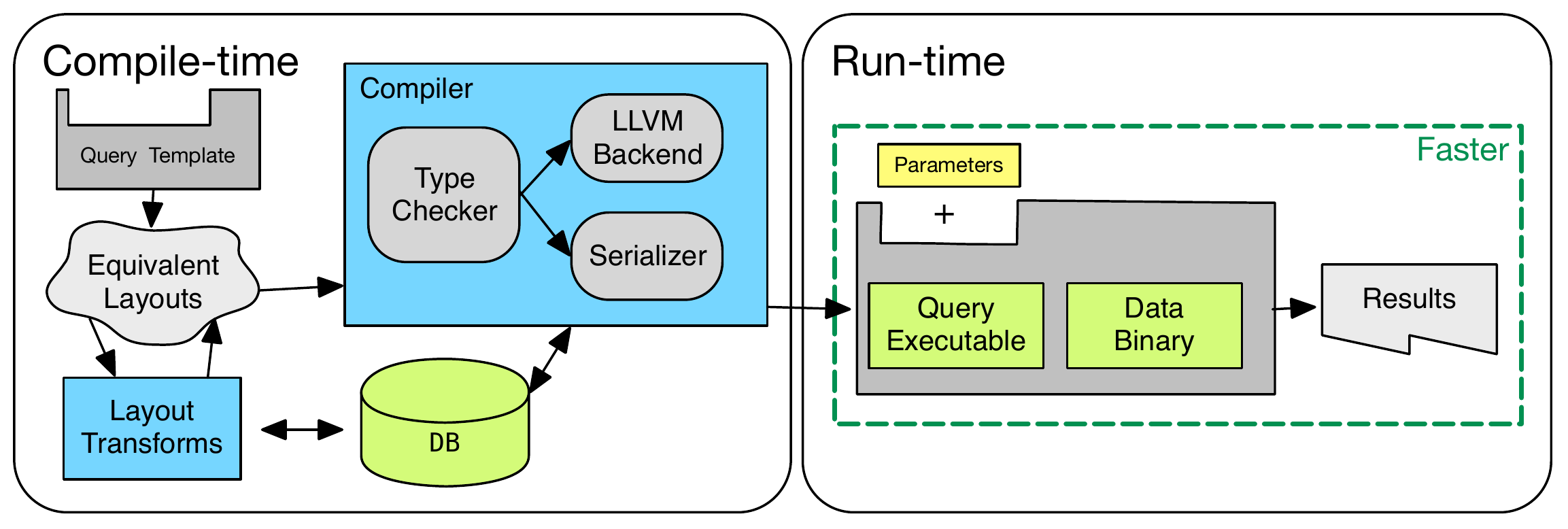}
  
  \includegraphics[width=0.6\columnwidth]{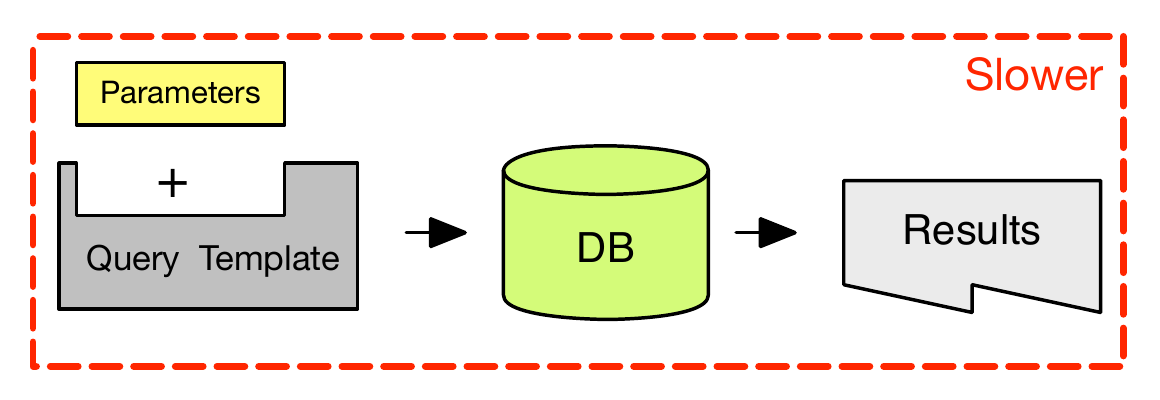}
  \vspace*{-2ex}
  \caption{An overview of the \projname{} system (above) vs a traditional RDBMS (below).}\label{fig:overview}
  \vspace*{-4ex}
\end{figure}



\projname{} is an attempt to capture some of the optimization opportunities
of static databases and to address the needs of these two scenarios. 
As \Cref{fig:overview} illustrates, the input to \projname{} is a dataset and a parameterized query 
that a client will want to invoke on the data. The user then interacts with \projname{} 
using high-level commands to generate an efficient implementation 
of an in-memory datastore specialized for the dataset and the parameterized query. The commands available in \projname{} give the programmer
tight control over the exact organization of the data in memory, allowing the user
to trade off memory usage against query performance without the risk of introducing
bugs.  \projname{} also uses code generation
techniques from high-performance in-memory databases to produce the low-level implementations
required for efficient execution. The result is a package of data and code
that uses significantly less memory than the most efficient in-memory databases and 
for some queries can even surpass the performance of in-memory databases that already rely on 
aggressive code generation and optimization~\cite{Neumann2011}.

\subsection{Contributions}

\projname{} is made possible by three major technical contributions: a new
notation to jointly represent the layout of the data in memory and the queries
that will be computed on it, a set of deductive optimization rules that
generalize traditional query optimization rules to jointly optimize the query
and the data layout, and a type-driven layout compiler to produce both a binary
representation of the data from the high-level data representation and
specialized machine code for accessing it. 



\vspace*{-1ex}
\paragraph*{Integrated Layout \& Query Language}
We define the \emph{layout algebra}, which extends the relational
algebra~\cite{Codd1970} with layout operators that describe the particular
data items to be stored and the layout of that data in memory. The layout
algebra is flexible and can express many layouts, including row stores
and clustered indexes. It supports nesting layouts, which gives control over
data locality and supports prejoining of data. Our use of a language which
combines query and layout operators makes it possible to write deductive
transformations that change both the runtime query behavior and the data layout.

\vspace*{-1ex}
\paragraph*{Deductive Optimization Rules}
\projname{} provides a set of equivalence preserving transformations which can
change both the query and the data layout. The user can apply these
transformations to deductively optimize their query without worrying about
introducing bugs. Alternatively, they can use \projname{}'s optimizer, which
automatically selects a sequence of transformations. \Projname{}'s specialized
notation turns transformations that would be complex and global in other
database systems into local syntactic changes.

\vspace*{-1ex}
\paragraph*{Type-driven Layout Compiler}
Existing relational synthesis tools use standard library data structures and
make extensive use of pointer based data structures that hurt
locality~\cite{Loncaric2018,Loncaric2016,Hawkins2011}. \projname{} uses a
specializing layout compiler that takes the properties of the data into account
when serializing it. Before generating the layout, \projname{} generates an
abstraction called a \emph{layout type} which guides the layout specialization.
For example, if the layout is a row-store with fixed-size tuples, the layout
compiler will not emit a length field for the tuples. Instead, this length will
be compiled directly into the query. This specialization process creates very
compact datasets and avoids expensive branches in generated code.

\vspace*{-1ex}
\paragraph*{High Performance Query Compiler}
\projname{} uses code generation techniques from the high performance in-memory
database literature~\cite{Neumann2011,Shaikhha2016,Tahboub2018,Rompf2015}. It
eschews the traditional iterator based query execution model~\cite{Graefe1994}
in favor of a code generation technique that produces simple, easily optimized
low-level code. \projname{} directly generates LLVM IR and augments the
generated IR with information from the layout type that allows LLVM to further
optimize it.

\vspace*{-1ex}
\paragraph*{Empirical Evaluation}
We empirically evaluate \projname{} on a benchmark derived from TPC-H, a
standard database benchmark~\cite{Council2008}. We show that \projname{} is
competitive with the state of the art in-memory compiled database system
\hyper{}~\cite{Neumann2011} while using significantly less memory. We also show
that \projname{} scales to larger queries than the leading data-structure
synthesis tool \cozy{}~\cite{Loncaric2018}.

\subsection{Limitations}

\emph{\Projname{} constructs read-only databases.} This design decision limits
the appropriate use cases for \projname{} but it enables important
optimizations. \Projname{} takes advantage of the absence of updates to tightly
pack data together, which improves locality. \Projname{} also aggressively
specializes the compiled query by including information about the layout, such
as lengths of arrays and offsets of layout structures. Providing this
information to the compiler improves the generated code.

\emph{Only one parameterized query can be optimized at once.} This is a
limitation, but it is not a serious one. A multi-query workload can be supported
by replicating the dataset and optimizing each query separately. \Projname{}
removes any data which is not needed by the query and it produces compact
layouts for the data that remains. This reduces the overhead of the replication.


\section{Motivating Example}\label{sec:example}

We now describe the operation of \projname{} on an application from the software
engineering literature. \demomatch{} is a tool which helps users understand
complex APIs using software demonstrations~\cite{Yessenov2017}.
\demomatch{} maintains a database of program traces---computed offline---which
it queries to discover how to use an API.\@
\demomatch{} is a good fit for \projname{}. The data in question is largely
static: computing new traces is an infrequent task. The data is automatically
queried by the tool, so there is no need to support ad-hoc queries.
Finally, query performance is important for \demomatch{} to work 
as an interactive tool. 

\subsection{Background}\label{sec:background}

\demomatch{} stores program traces as ordered collections of \emph{events}
(e.g., function calls). Traces have an inherent tree structure: each event has
an \texttt{enter} and an \texttt{exit} and nested events may occur between the
enter and exit.

\begin{figure}
  \vspace*{-2ex}
  \centering
  \begin{subfigure}[b]{0.12\textwidth}
    \centering
    \includegraphics[width=\textwidth]{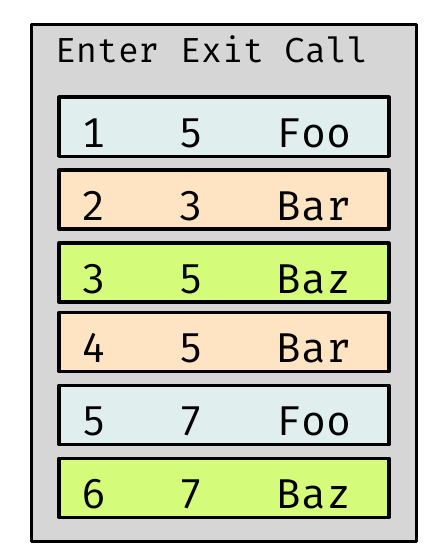}
    \caption{In tabular form.}
  \end{subfigure}\hspace{1em}%
  \begin{subfigure}[b]{0.12\textwidth}
    \centering
    \includegraphics[width=\textwidth]{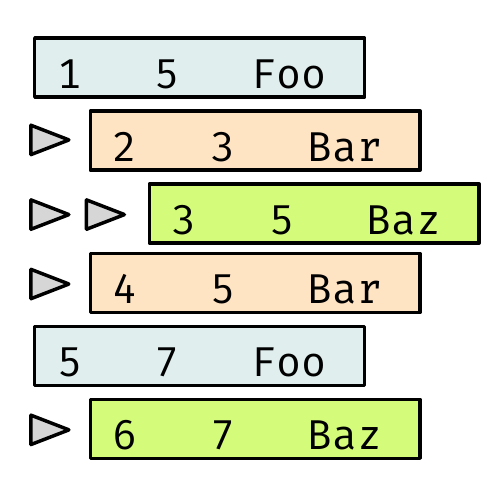}
    \caption{In tree form.}
  \end{subfigure}\hspace{1em}%
  \begin{subfigure}[b]{0.12\textwidth}
    \centering
    \includegraphics[width=\textwidth]{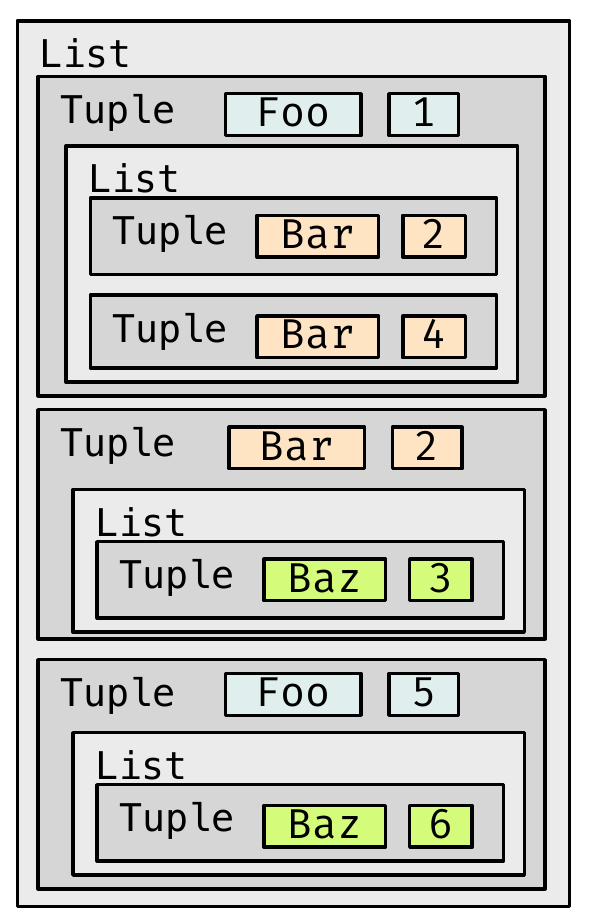}
    \caption{As a nested layout.}
  \end{subfigure}
  \vspace*{-2ex}
  \caption{Graphical representation of the \demomatch{} data.}\label{fig:demomatch}
  \vspace*{-1ex}
\end{figure}


A critical query in the \demomatch{} system finds nested calls to particular
functions in a trace of program events:
{\small
\begin{align*}
&\textsf{select}\ p.enter,\ c.enter\ \textsf{from}\ log\ \textsf{as}\ p,\ log\ \textsf{as}\ c\ \textsf{where} \\
&\quad p.enter < c.enter \land c.enter < c.exit \land p.id = \$pid \land c.id = \$cid
\end{align*}
} We refer to the caller as the \emph{parent} function and the callee as the
\emph{child} function. Let $p$ and $c$ be the traces of events inside the parent
and child function bodies respectively. The join predicate $p.enter < c.enter
\land c.enter < p.exit$ selects calls to the child function from inside the
parent function. The predicate $p.id = \$pid \land c.id = \$cid$ selects the
pair of functions that we are interested in, where $\$pid$ and $\$cid$ are
parameters.

\subsection{The Layout Algebra}
\Projname{} programs are written in a language called the \langname{}. The
\langname{} is similar to the relational algebra, but as we will see shortly, it
can represent the layout of data as well as the operation of queries. By design,
it is more procedural than SQL, which is more akin to the relational
calculus~\cite{Codd1971}. For example, SQL leaves choices like join ordering to
the query planner, whereas in the \langname{} join ordering is explicit.

In designing the \langname{}, we follow a well-worn path in deductive synthesis
of creating a uniform representation that can capture all the refinement steps
from a high-level program to a low-level one. Accordingly, the \langname{} can
express programs which contain a mixture of high-level relational constructs and
low-level layout constructs. At some point, a \langname{} program contains
enough implementation information that the compiler can process it. We say that
these programs are \emph{serializable} (\Cref{sec:serializability}).

Here is the nested call query from \Cref{sec:background} translated into the
\langname{}: {\small
\begin{align*}
  &\select{}(\{enter_p, enter_c\},\ \join{}(enter_p < enter_c \land enter_c < exit_p,\\
  &\quad\filter{}(\$pid = id_p,\\
  &\quad\quad\select{}(\{id \mapsto id_p, enter \mapsto enter_p, exit \mapsto exit_p\}, log)), \\
  &\quad\filter{}(\$cid = id_c, \select{}(\{id \mapsto id_c, enter \mapsto enter_c\}, log))))
\end{align*}
} 
\vspace*{-3ex}

This program can be read as follows. $\textsf{filter}$ takes a predicate as its
first argument and a query as its second. It filters the query by the predicate.
$\textsf{join}$ takes a predicate as its first argument and queries as its
second and third. The two queries are joined together using the predicate.
$\textsf{select}$ takes a list of expressions with optional names and a query,
and selects the value of each expression for each tuple in the query, possibly
renaming it.

The scoping rules for the \langname{} may look somewhat unusual, but they are
intended to mimic the scoping conventions of SQL.\@ In this query, the names
$enter$, $exit$ and $id$ are field names in the $log$ relation. The \select{}
operators introduce new names for these fields, using the $\mapsto$ operator, so
that $log$ can be joined with itself. We formalize the semantics of the
\langname{}, including the scoping rules, in \Cref{sec:semantics}.

Note that at this point no layout is specified for $log$, so this is not a
serializable program. Even though it is not serializable (and so cannot be
compiled), this program still has well defined semantics. In
\Crefrange{sec:size-opt}{sec:both-opt}, we describe how layouts can be
incrementally introduced by transforming the program until it is serializable.

\subsection{Optimization Trade-offs}

The nested call query is interesting because the data in question is fairly
large---hundreds of thousands of rows---and keeping it fully in memory, or even
better in cache, is a significant performance win. Therefore, minimizing the
size of the data in memory should improve performance.

However, there is a fundamental trade-off between a more compact data
representation and allowing for efficient access. Sometimes the two goals are
aligned, but often they are not. For example, creating a hash index allows
efficient access using a key, but introduces overhead in the form of a mapping
between hash keys and values.

In the rest of this section we examine three layouts at different points in this
trade-off space: a compact nested layout with no index structures
(\Cref{fig:nested-layout}), a layout based on a single hash index
(\Cref{fig:hash-layout}), and a layout based on a hash index and an ordered
index (\Cref{fig:ordered-layout}). A priori, none of these layouts is clearly
superior. The hash based layout is the largest, but has the best lookup
properties. The nested layout precomputes the join and uses nesting to reduce
the result size, but is more expensive for lookups. The last layout must compute
the join at runtime but it has indexes that will make that computation fast. The
power of \projname{} is that it allows users to effectively explore different
layout trade-offs by freeing them from the necessity to ensure the correctness
of each candidate.

\subsection{Nested Layout}\label{sec:size-opt}
Our first approach to optimizing the nested call query is to materialize the
join, since joins are usually expensive, and to use nesting to reduce the size
of the resulting layout.

The first step is to apply transformation rules (\Cref{sec:relational-xform})
to hoist and merge the filters. Now the join is in a term with no query
parameters, so it can be evaluated at compile time: {\small
\begin{align*}
  &\select{}(\{enter_p, enter_c\},\ \filter{}(\$pid = id_p \land \$cid = id_c,\\
  &\quad\join{}(enter_p < enter_c \land enter_c < exit_p, \\
  &\quad\quad\select{}(\{id \mapsto id_p, enter \mapsto enter_p, exit \mapsto exit_p\}, log), \\
  &\quad\quad\select{}(\{id \mapsto id_c, enter \mapsto enter_c\}, log))))
\end{align*}
\vspace{-3ex}
} 

After applying two more rules---projection to eliminate unnecessary fields
(\Cref{sec:xform-proj}) and join elimination (\Cref{sec:xform-join})---the
result is the \emph{layout expression} represented in code in
\Cref{fig:nested-code} and graphically in \Cref{fig:nested-layout}. In this
program we see our first layout operators: $\alist{}(\cdot, \cdot)$ and
$\ctuple{[\dots]}$.\footnote{As a point of notation, we separate layout
  operators from non-layout operators visually by \textbf{bolding} them. This is
  just to make the programs easier to read.} The \langname{} extends the
relational algebra with these operators, allowing us to write layout
expressions, which describe how their arguments will be laid out in memory.

\begin{figure}
{\small
  \begin{align*}
    &\select{}(\{enter_p, enter_c\},\ \filter{}(id_c = \$cid \land id_p = \$pid,\\
    &\quad\alist{}(\as{\select{}(\{id \mapsto id_p, enter \mapsto enter_p, exit \mapsto exit_p\}, log)}{lp},\\
    &\quad\quad\ctuple{[\scalar{lp.id_p}, \scalar{lp.enter_p},\\
    &\quad\quad\quad\alist{}(\as{\filter{}(lp.enter_p < enter_c \land enter_c < lp.exit_p,\\
    &\quad\quad\quad\quad\select{}(\{id \mapsto id_c, enter \mapsto enter_c\}, log))}{lc},\\
    &\quad\quad\quad\quad\ctuple{[\scalar{lc.id_c}, \scalar{lc.enter_c}]})]})))
  \end{align*}
} 
\vspace{-4ex}
\caption{Nested list layout.}\label{fig:nested-code}
\vspace*{-3ex}
\end{figure}

\Cref{fig:nested-code} can be read as follows. The operator
$\alist{}(\as{log}{l}, q)$ creates a list with one element for every tuple in
$log$. Each element in the list is laid out according to $q$. For example, the
outermost list in \Cref{fig:nested-code} creates a list from the $id$, $enter$
and $exit$ fields of the $log$ relation. Each row in the list is laid out as a
$\atuple{\textsf{cross}}$\footnote{In this expression, \textsf{cross} specifies how the tuple
  will eventually be read. Layout operators evaluate to sequences, so a tuple
  needs to specify how these sequences should be combined. In this case, we take
  a cross product.}. The first two elements in the tuple are the scalar
representations of the $id_p$ and $enter_p$ fields, and the third element is a
nested list. Note that the content of that inner list is filtered based on the
value of $lp.enter_p$ and $lp.exit_p$, and each element laid out as just a pair
of two scalars $id_c$ and $enter_c$.

The query is now serializable because it satisfies a set of rules described in
\Cref{sec:serializability}. At a high-level, the rules require that we never use
a relation without specifying its layout, a requirement that in this case is
satisfied because all references to the original log relation appear in the
first arguments of \alist{} operators.

\Cref{fig:nested-layout} shows the structure of the resulting layout. This
layout is quite compact. It is smaller than the fully materialized join because
of the nesting; the caller \texttt{id} and \texttt{enter} fields are only stored
once for each matching callee record. When we benchmark this query, we find
found that it performs reasonably well (11.5ms) and is fairly small (50Mb).

We can make this layout more compact by applying further transformations. For
example, we know that $enter_p < enter_c < exit_p$. If we store $enter_c -
enter_p$ instead of $enter_c$ we can save some space by using fewer bits. The
ability to take advantage of this kind of knowledge about the structure of the
data is an important feature of our approach.

\begin{figure}
  \vspace*{-2ex}
  \centering
  \begin{minipage}{0.43\columnwidth}
  \begin{subfigure}[b]{\textwidth}
    \includegraphics[width=\textwidth]{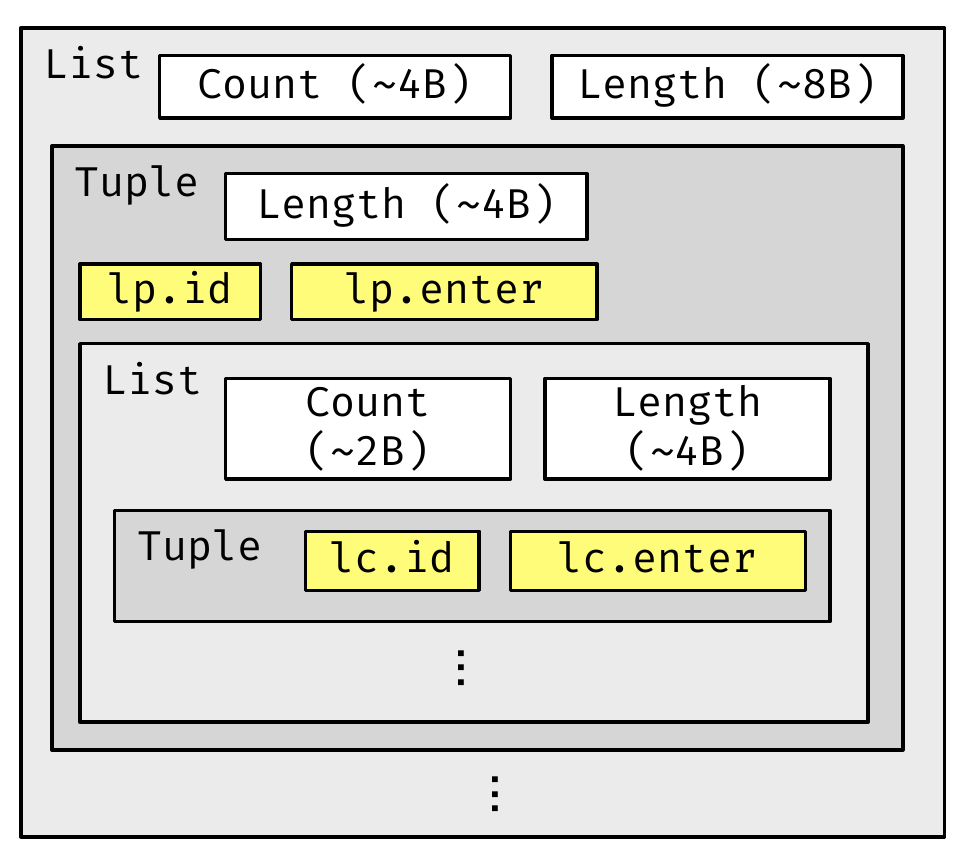}
    \caption{Nested.}\label{fig:nested-layout}
  \end{subfigure}
  \begin{subfigure}[b]{\textwidth}
    \includegraphics[width=\textwidth]{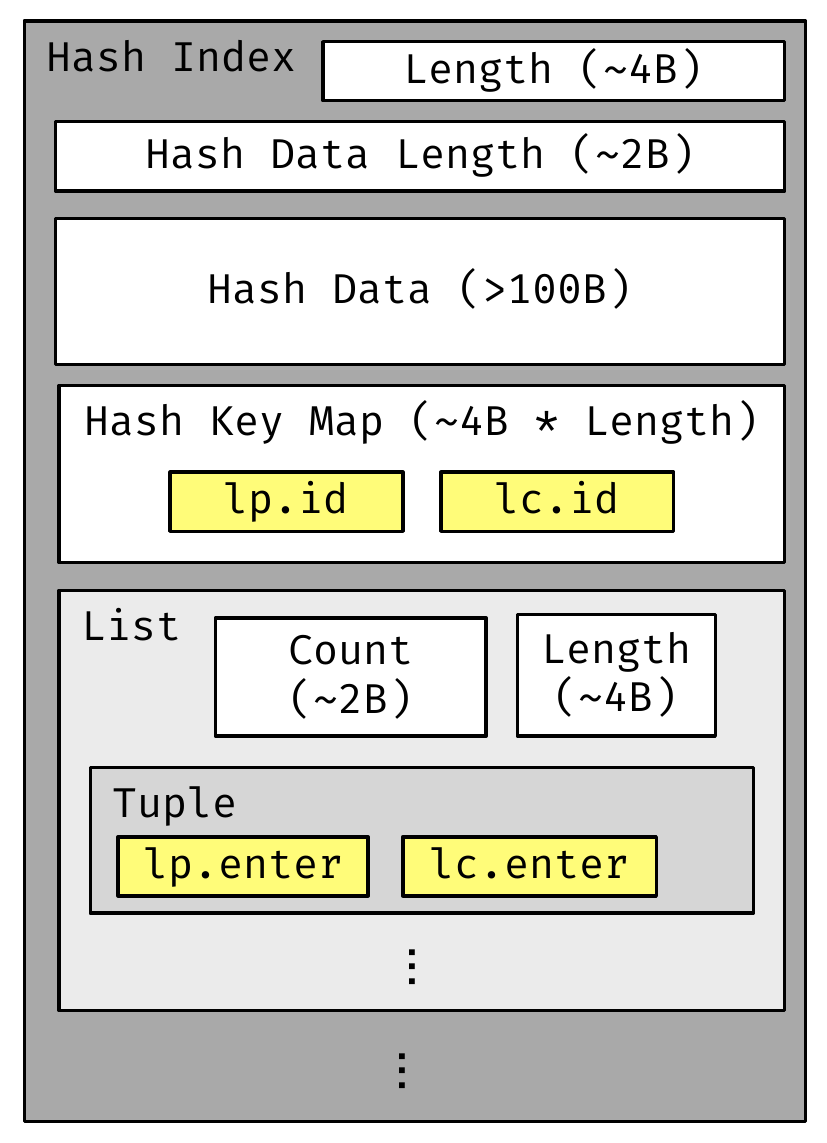}
    \caption{Hash-index.}\label{fig:hash-layout}
  \end{subfigure}
\end{minipage}%
\begin{minipage}{0.45\columnwidth}
\hspace{1em}%
  \begin{subfigure}[b]{\textwidth}
    \includegraphics[width=\textwidth]{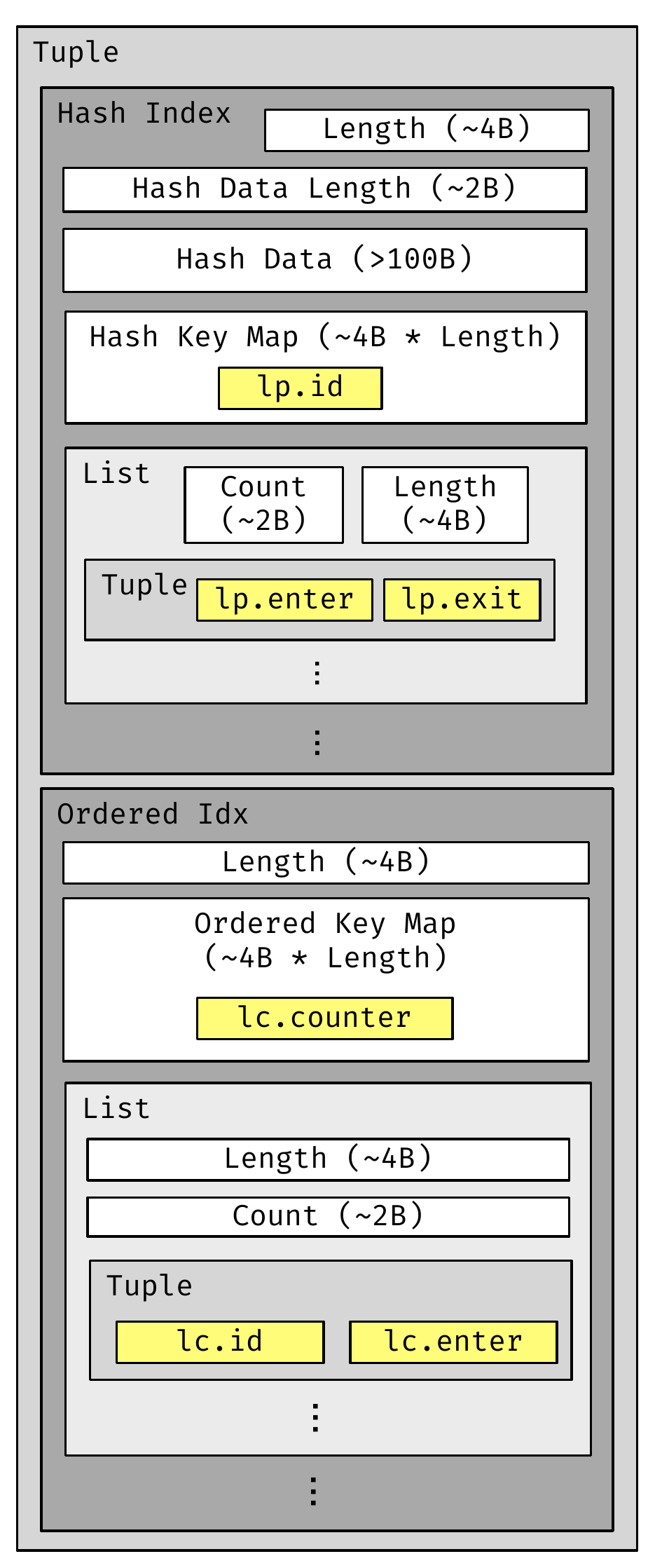}
    \caption{Ordered-index.}\label{fig:ordered-layout}
  \end{subfigure}
\end{minipage}
\vspace*{-2ex}
  \caption{Sample layouts (relational data is highlighted).}
  \vspace*{-3ex}
\end{figure}

\subsection{Hash-index Layout}\label{sec:lookup-opt}

Now we optimize for lookup performance by fully materializing the join and
creating a hash index. This layout will be larger than the nested layout but
look ups into the hash index will be quick, which will make evaluating the
equality predicates on $id$ fast. \Cref{fig:hash-layout} shows the structure of
the resulting layout. When we evaluate the query, we find that it is much faster
(0.4ms) but is larger than the nested query (60Mb).

\subsection{Hash- and Ordered-index Layout}\label{sec:both-opt}

Finally, we investigate a layout which avoids the full join materialization, but
still has enough indexing to be fast. We can see that the join condition is a
range predicate, so we would like to use an index that supports efficient range
queries to make that predicate efficient (\Cref{sec:xform-part}). Then we can
push the filters and introduce a hash table to select $id_p$. The resulting
layout is shown in \Cref{fig:ordered-layout}. This layout will be larger than
the original relation, but smaller than the other two layouts (9.8Mb), and it
allows for much faster computation of the join and one of the filters (0.6ms).

This program introduces three new operators: \oidx{}, \hidx{} and \depjoin{}.
\oidx{} creates indexes that support efficient range queries. Its first argument
defines the set of keys and its second argument defines the values and their
layout as a function of the keys. The remaining arguments are the upper and
lower bounds to use when reading the index ($p.enter_p$ and $p.exit$ in this
case). \hidx{} is similar, but it creates efficient point indexes (in this
query, $\$pid$ is the key). {\small
  \begin{align*}
    &\select{}(\{enter_p, enter_c\},\ \depjoin{}(\as{\hidx{}(\as{\select{}(\{id\},\ log)}{h},\\
    &\quad\alist{}(\as{\filter{}(h.id = id \land enter > exit,\ log)}{lh},\\
    &\quad\quad\ctuple{[\scalar{lh.enter \mapsto enter_p},\ \scalar{lh.exit}]}),\\
    &\quad\$pid)}{p},\\
    &\quad\filter{}(id = \$cid,\ \oidx{}(\as{\select{}(\{enter\}, log)}{o},\\
    &\quad\quad\alist{}(\as{\filter{}(enter = o.enter,\ log)}{lo},\\
    &\quad\quad\quad\ctuple{[\scalar{lo.id},\ \scalar{lo.enter \mapsto enter_c}]}),\\
    &\quad\quad p.enter_p,\ p.exit))))
  \end{align*}
}\vspace{-3ex}

The more interesting operator is \depjoin{}, which implements a dependent join.
A dependent join is one where the right-hand-side of the join can refer to
fields from the left-hand-side. In the \depjoin{} operator, the left-hand-side
is given a name (here it is $p$) that the right-hand-side can use to refer to
its fields. One way to think about a dependent join is as a relational for loop:
it evaluates the right-hand-side for each tuple in the left-hand-side,
concatenating the results. Unlike the layout operators \alist{}, \hidx{} and
\oidx{}, \depjoin{} executes entirely at runtime. It does not introduce any
layout structure.


\section{Language}

In this section we describe the \langname{} in detail. The \langname{} starts
with the relational algebra and extends it with layout operators. These layout
operators have relational semantics, but they also have layout semantics which
describes how to serialize them to data structures. The combination of
relational and layout operators allows the \langname{} to express both a query
and the data store that supports the execution of the query.

Programs in the \langname{} have three semantic interpretations:
\begin{enumerate}
\item The \emph{relational semantics} describes the behavior of a \langname{}
  program at a high level. We define this semantics using a theory of ordered
  finite relations~\cite{Cheung2013}. According to this semantics, A \langname{}
  program can be evaluated to a relation in a context containing relations and
  query parameters.
\item The \emph{layout semantics} describes how the compiler creates a data file
  from a serializable \langname{} program. The layout semantics operates in a
  context which contains relations, but \emph{not} query parameters.
\item The \emph{runtime semantics} describes how the compiled query executes,
  reading the layout file and using the query parameters to produce the query
  output. The runtime semantics operates in a context which contains query
  parameters but \emph{not} relations.
\end{enumerate}
These three semantics are connected: the layout semantics and the runtime
semantics combine to implement the relational semantics. The relational
semantics serves as a specification. An interpreter written according to the
relational semantics should execute \langname{} programs in the same way as our
compiler. In this section, we discuss the relational semantics in detail.
We leave the detailed discussion of the layout and runtime semantics to
\Cref{sec:layout-semantics} and \Cref{sec:runtime-semantics}.

\subsection{Syntax}\label{sec:syntax}

\begin{figure}
  {\small
    \begin{alignat*}{3}
      x &::=&&\ \text{identifiers} \quad o ::= \textsf{asc} ~|~
      \textsf{desc} \quad \tau ::= \textsf{cross} ~|~ \textsf{concat} \\
      v &::=&&\ \text{integers} ~|~ \text{strings} ~|~ \text{Booleans} ~|~ \text{floats} ~|~ \text{dates} ~|~ \textsf{null} \\
      e &::=&&\ v ~|~ x ~|~ e + e' ~|~ e - e' ~|~ e \times e' ~|~ e / e' ~|~ e\
      \%\ e' ~|~ e < e' ~|~ e \leq e' ~|~ e > e' \\
      &\ccol{|}&&\ e \geq e' ~|~ e = e' 
      ~|~ \ite{e}{e_t}{e_f} ~|~ \textsf{exists}(q) ~|~ \textsf{first}(q) ~|~
      \textsf{count}() \\
      &\ccol{|}&&\ \textsf{sum}(e) ~|~ \textsf{min}(e)
      ~|~ \textsf{max}(e) ~|~ \textsf{avg}(e) \\
      t &::=&&\ \{x_1 \mapsto e_1, \dots, x_k \mapsto e_k\} \\
      q &::=&&\ \textsf{relation}(x) ~|~ \textsf{dedup}(q) ~|~
      \textsf{select}(t, q) ~|~ \textsf{filter}(e, q) ~|~ \textsf{join}(e, q,
      q') \\
      &\ccol{|}&&\ \textsf{group-by}(t, [x_1, \dots, x_m], q) ~|~
      \textsf{order-by}([e_1\ o_1, \dots, e_m\ o_m], q) \\
      &\ccol{|}&&\  \textsf{depjoin}(\as{q}{x}, q') ~|~ \scalar{x \mapsto e}
      ~|~\atuple{\tau}(t) ~|~ \alist{}(\as{q_r}{x}, q) \\
      &\ccol{|}&&\ \hidx{}(\as{q_k}{x}, q_v, t_k) ~|~ \oidx{}(\as{q_k}{x}, q_v, t_{lo}, t_{hi})~|~\emptyset \\
    \end{alignat*}
  } 
  \vspace*{-7ex}
  \caption{Syntax of the \langname.}\label{fig:syntax}
  \vspace*{-3ex}
\end{figure}


\Cref{fig:syntax} shows the syntax of the \langname{}. Note that the \langname{}
can be divided into relational operators (\select{}, \filter{}, \join{}, etc.)
and layout operators (\alist{}, \hidx{}, etc.). The \langname{} is a strict
superset of the relational algebra. In fact, the layout operators have
relational semantics in addition to byte-level data layout semantics
(see~\Cref{sec:layouts}).

\subsection{Semantics}\label{sec:semantics}

\begin{figure}
  \vspace{-1ex}
  \begin{small}
  \begin{gather*}
    Id = (Scope?, Name) \quad
    Context = Tuple = \{Id \mapsto Value\} \\
    Relation = [Tuple] \\
    \sigma : Context \enspace
    \delta : Id \mapsto Relation \enspace
    s : Id \enspace
    t : Tuple \enspace
    v : Value \enspace
    r : Relation
  \end{gather*}
  \begin{equation}
    \inference{
      \forall n.\ n \not \in \sigma \lor n \not \in \sigma' 
    }{
      \sigma \cup \sigma' = \{n_i \mapsto v_i ~|~ \sigma[n_i] = v_i \lor \sigma'[n_i] =
      v_i\}
    }\label{eq:ctx}
  \end{equation}
  \vspace{1ex}
  \begin{equation}
    \inference{
      t = \{n_1 \mapsto e_1, \dots, n_m \mapsto e_m\} &
      \forall i.\ \eto{e_i}{v_i} 
    }{
      \eto{t}{\{n_1 \mapsto v_1, \dots, n_m \mapsto v_m\}}
    }\label{eq:record}
  \end{equation}
  
  \begin{equation}
    \inference{
      \delta[n] = r
    }{
      \eto{\textsf{relation}(n)}{r}
    }\label{eq:relation}
  \end{equation}
  \vspace{1ex}
  \begin{equation}
    \inference{
      \eto{q}{r_q} \\
      r = [t ~|~ t <- r_q \quad \eto[\sigma \cup t, \delta]{e}{\textsf{true}}]
    }{
      \eto{\textsf{filter}(e, q)}{r}
    }\label{eq:filter}
  \end{equation}
  \vspace{1ex}
  \begin{equation}
    \inference{
      \eto{q}{r} &
      r'' = \left[t' ~\Big|~
        {\begin{array}{c}
           t <- r,\ t' <- r' \quad \eto[\sigma \cup t_s, \delta]{q'}{r'} \\
           t_s = \{s.f \mapsto v ~|~ (f \mapsto v) \in t\}
         \end{array}}
     \right]
   }{
     \eto{\textsf{depjoin}(\as{q}{s}, q')}{r''}
   }\label{eq:depjoin}
 \end{equation}

 \begin{minipage}{0.65\columnwidth}
   \begin{equation}
     \inference{\eto{e}{v}}{
       \eto{\scalar{n \mapsto e}}{[\{n \mapsto v\}]}
     }\label{eq:scalar}
   \end{equation}
 \end{minipage}%
 \begin{minipage}{0.35\columnwidth}
   \begin{equation}
     \inference{}{\sigma, \delta |- \emptyset \evalto [\ ]}\label{eq:empty}
   \end{equation}
 \end{minipage}
 
 \begin{minipage}{0.55\columnwidth}
   \begin{equation}
     \inference{
       \eto{\textsf{depjoin}(\as{q_r}{s}, q)}{r}
     }{
       \eto{\alist{}(\as{q_r}{s}, q)}{r}
     }\label{eq:list}
   \end{equation}
 \end{minipage}%
 \begin{minipage}{0.45\columnwidth}
   \begin{equation}
     \inference{}{\eto{\atuple{\tau}([~])}{[~]}}\label{eq:etuple}
   \end{equation}
 \end{minipage}

 \begin{equation}
   \inference{
     \eto{q_1}{r_q} &
     \eto{\atuple{\tau}([q_2, \dots, q_n])}{r_{qs}} 
   }{
     \eto{\atuple{\textsf{cross}}([q_1, \dots, q_n])}{[t \cup ts ~|~ t <- r_{q}, ts <- r_{qs}]}
   }\label{eq:ctuple}
 \end{equation}
 
 \begin{equation}
   \inference{
     \eto{\textsf{depjoin}(\as{q_k}{s}, \filter{}(l_{lo}\leq s.x \leq l_{hi},
       q_v))}{r} \\
     \textsc{schema}(q_k) = [x]
   }{
     \eto{\oidx{}(\as{q_k}{s}, q_v, l_{lo}, l_{hi})}{r}
   }\label{eq:oidx}
 \end{equation}
\end{small}
\vspace{-2ex}
\caption{Selected relational semantics of the \langname.}\label{fig:runtime-semantics}
\vspace{-2ex}
\end{figure}


The semantics (\Cref{fig:runtime-semantics}) operates on three kinds of values:
scalars, tuples and relations. Scalars are values like integers, Booleans, and
strings. Tuples are finite mappings from field names to scalar values. Relations
are represented as finite, ordered sequences of tuples. $[~]$ stands for the
empty relation, $:$ is the relation constructor, and $\concat$ denotes the
concatenation of relations.

We use sequences instead of sets for two reasons. First, sequences are more like
bag semantics than the set semantics of the original relational algebra. This
choice brings the \langname{} more in line with the semantics of SQL, which is
convenient for our implementation. Second, sequences allow us to represent query
outputs which have an ordering.

In the semantic rules, $\sigma$ is an evaluation context; it maps names to
scalar values. $\delta$ is a relational context; it maps names to relations. We
separate the two contexts because the relational context $\delta$ is global and
immutable; it consists of a universe of relations that exist when the query is
executed (or compiled) which are contained in some other database system. The
evaluation context $\sigma$ initially contains the query parameters, but some
operators introduce new bindings in $\sigma$. $\cup$ denotes the binding of a
tuple into an evaluation context. Read $\sigma \cup t$ as a new evaluation
context that contains the fields in $t$ in addition to the names already in
$\sigma$.

In the rules, $|-$ separates contexts and expressions and $\evalto$ separates
expressions and results. Read $\sigma, \delta |- l \evalto s$ as ``the layout
$l$ evaluates to the relation $s$ in the context $\sigma, \delta$.''

We borrow the syntax of list comprehensions to describe the semantics of the
\langname{} operators. For example, consider the list comprehension in the
\textsf{filter} rule:
$[t ~|~ t <- r_q, (\sigma \cup t, \delta |- e \evalto \textsf{true})]$, which
corresponds to the expression $\textsf{filter}(e,q)$. This list comprehension
filters $r_q$ by the predicate $e$ where $r_q$ is the relation produced by $q$.
$e$ is evaluated in a context $\sigma \cup t$ for each tuple $t$ in $r_q$.

Comprehensions that contain multiple $<-$, as in the \textsf{join} rule, should
be read as a cross product.

\subsubsection{Relational Operators}\label{sec:relational-ops}

First, we describe the semantics of the relational operators: \textsf{relation},
\textsf{filter}, \textsf{join}, \textsf{select}, \textsf{group-by},
\textsf{orderby}, \textsf{dedup}, and \textsf{depjoin}. These operators are
modeled after their equivalent SQL constructs. For brevity and because they are
straightforward, we omit the rules for selection with aggregates,
\textsf{group-by}, and \textsf{order-by} from \Cref{fig:runtime-semantics}.

\textsf{relation} returns the contents of a relation in the relational context
$\delta$. \textsf{filter} filters a relation by a predicate $e$. \textsf{join}
takes the cross product of two relations and filters it using a predicate $e$.

\textsf{select} is used for projection, aggregation, and renaming fields. It
takes a tuple expression $t$ and a relation $r$. If $t$ contains no aggregation
operators, then a new tuple will be constructed according to $t$ for each tuple
in $r$. If $t$ contains an aggregation operator (\textsf{count}, \textsf{sum},
\textsf{min}, \textsf{max}, \textsf{avg}), then \textsf{select} will aggregate
the rows in $r$. If $t$ contains both aggregation and non-aggregation operators,
then the non-aggregation operators will be evaluated on the last tuple in $r$.

\textsf{group-by} takes a list of expressions, a list of fields, and a relation.
It groups the tuples in the relation by the values of the fields, then computes
the aggregates in the expression list. \textsf{order-by} takes a list of
expression-order pairs and a relation. It orders the tuples in the relation
using the list of expressions to compute a key. \textsf{dedup} removes duplicate
tuples.

Finally, \textsf{depjoin} denotes a dependent join, where the right-hand-side of
the join can depend on values from the left-hand-side. It is similar to a
for-each loop; $\textsf{depjoin}(\as{q}{n},\ q')$\footnote{In this expression,
  $n$ is a \emph{scope}, and it qualifies the names in $q$. Scopes are discussed
  in more detail in \Cref{sec:scopes}.} can be read ``evaluate $q'$ for each
tuple in $q$ and concatenate the results.'' We use \depjoin{} as a building
block to define the semantics of the layout operators.

\subsubsection{Layout Operators}\label{sec:layouts}

We extend the relational algebra with layout operators that specify the layout
of data in memory at a byte level. The nesting and ordering of the layout
operators correspond to the nesting and ordering of the data structures that
they represent. Nesting allows data that is accessed together to be stored
together, increasing spatial locality. Note that layout operators can capture
the results of executing common relational algebra operations such as joins or
selections, allowing query processing to be performed at compile time. In
addition, layout primitives can express common relational data storage patterns,
such as row stores and clustered indexes.

\projname{} supports the following data structures:

\noindent
\textbf{Scalars:} Scalars can be machine integers (up to 64 bits), strings,
Booleans, and decimal fixed-point.

\noindent
\textbf{Tuples:} Tuples are layouts that can contain layouts with
  different types. If a collection contains tuples, all the tuples must have the
  same number of elements and their elements must have compatible types. Tuples
  can be read either by taking the cross product or concatenating their
  sub-layouts.
  
\noindent
\textbf{Lists:} Lists are variable-length layouts. Their contents must be
of the same type.

\noindent
\textbf{Hash indexes:} Hash indexes are mappings between scalar keys and
  layouts, stored as hash tables. Like lists, their keys must have the same
  type.
  
\noindent
\textbf{Ordered indexes:} Ordered indexes are mappings between scalar keys
and layouts, stored as ordered mappings.

Each data structure has a corresponding layout operator. The layout operators
are the novel part of the \langname{} and their semantics are therefore
non-standard. The relational semantics of the layout operators are
in~\Cref{fig:runtime-semantics}. Although the layout operators can be used to
construct complex, nested layouts, they evaluate to flat relations of tuples of
scalars, just like the relational operators. The rules
in~\Cref{fig:runtime-semantics} only describe the relational behavior of the
layout operators; they do not address the question of how data is laid out or
how it is accessed. We discuss these aspects of the layout operators
in~\Cref{sec:layout-semantics}.

The simplest layout operators are \nscalar{} and \ntuple{}. The \nscalar{}
represents a single scalar value. Evaluating a \nscalar{} operator produces a
relation containing a single tuple. The \ntuple{} operator represents a
fixed-size, heterogeneous list of layouts. When evaluated, each layout in the
tuple produces a relation, which are combined either with a cross product or by
concatenation.

Note that evaluating a \ntuple{} operator produces a \emph{relation} not a tuple.
Although these semantics are slightly surprising, there are two reasons why we
chose this behavior. First, it is consistent with the other layout operators,
all of which evaluate to relations. Second, \ntuple{}s can contain other
layouts (\alist{}s for example) which themselves evaluate to relations.

The remaining layout operators---\alist{}, \hidx{} and \oidx{}---have a similar
structure. We discuss the \alist{} operator in detail. \Cref{eq:list} specifies
the behavior of \alist{}. Note that \alist{} is essentially an alias for
\depjoin{}. Like \depjoin{}, \alist{} takes two arguments: $q_r$ and $q$. These
two arguments should be interpreted as follows: $q_r$ describes the data in the
list. Each element of the list has a corresponding tuple in $q_r$, so the length
of the list is the same as the length of $q_r$. One can think of each tuple in
$q_r$ as a kind of key that determines the contents of each list element. On the
other hand, $q$ describes how each list element is laid out. $q$ will be
evaluated separately for each tuple in $q_r$. It determines for each ``key'' in
$q_r$, what the physical layout of each list element will be, as well as how
that element must be read.

Returning to the query in~\Cref{sec:size-opt}, the inner \alist{}
operator
\begin{small}
\begin{align*}
  &\alist{}(\as{\filter{}(lp.enter_p < enter_c \land enter_c < lp.exit_p,\\
  &\quad\select{}(\{id \mapsto id_c,\ enter \mapsto enter_c\},\ log))}{lc},\\
  &\quad\quad\ctuple{[\scalar{lc.id_c},\ \scalar{lc.enter_c}]})
\end{align*}
\end{small}
selects the tuples in $log$ where $enter_c$ is between $enter_p$ and $exit_p$,
and creates a list of these tuples. The first argument describes the contents of
the list and the second describes their layout. This program will generate a
layout that has a list of tuples, structured as $[(id_{c1}, enter_{c1}), \dots,
(id_{cn}, enter_{cn})]$.

\hidx{} and \oidx{} are similar to \alist{}. They have a query $q_k$
that describes which keys are in the index and a query $q_v$ that describes the
contents and layout of the values in the index. For example, in:
\begin{small}
\begin{align*}
  &\hidx{}(\as{\select{}(\{id\}, log)}{h},\ \alist{}(\as{\filter{}(\{id = h.id\}, log)}{l},\\
  &\quad\ctuple{[\scalar{l.enter}, \scalar{l.exit}]}),\ \$pid),
\end{align*}
\end{small}
the keys to the hash-index are the $id$ fields from the $log$ relation. For each
of these fields, the index contains a list of corresponding $(enter, exit)$
pairs, stored in a tuple. When the hash-index is accessed, $\$pid$ is used as
the key. This program generates a layout of the form: $\{id \mapsto [(enter,
exit), \dots], \dots\}$, which is a hash-index with scalars for keys and lists
of tuples for values.

\subsubsection{Scopes \& Name Binding}\label{sec:scopes}

The scoping rules of the \langname{} are somewhat more complex than the
relational algebra. There are two ways to bind a name in the \langname{}: by
creating a relation or by using an operator which creates a scope.

All of the operators in the \langname{} return a relation. Some operators simply
pass through the names in their parameter relations. Others, such as
\textsf{select} and \nscalar{} can be used for renaming or for creating new
fields.

Some operators, such as \depjoin{}, create a \emph{scope}. A scope is a tag which
uniquely identifies the binding site of a name. For example, in
$\depjoin{}(\as{q}{s}, q')$, a field $f$ from $q$ is bound in $q'$ as $s.f$.
Scoped names with distinct scopes are distinct and scoped names are distinct
from unscoped names. We add scopes to the \langname{} as a syntactically
lightweight mechanism for renaming an entire relation. Renaming entire relations
is necessary because shadowing is prohibited in the \langname{}. Prohibiting
shadowing removes a major source of complexity when writing transformations.
While we could use \select{} for renaming, we opted to add scopes so that
renaming at binding sites would be part of the language rather than a pervasive
and verbose pattern.

There are still situations when renaming entire relations using \select{} is
necessary. For example, in a self-join one side of the join must be renamed.

\subsection{Staging \& Serializability}\label{sec:serializability}

Another way to view the three semantic interpretations is from the point of view
of multi-stage programming. A serializable \langname{} program can be evaluated
in two stages: the layout is constructed in a compile-time stage, then the
compiled query reads the layout and processes it in a run-time stage. However,
while traditionally program staging is used to implement code specialization, in
the \langname{} staging is used to implement data specialization. This
difference in focus leads to different implementation challenges. In particular,
the ``unstaged'' version of a \langname{} program is often large (tens to
hundreds of megabytes). The \langname{} compiler must be carefully designed to
handle this scale.

In \Cref{sec:layouts}, we explained how the layout operators execute in two
stages: one stage at compile time and one stage at query runtime. Only a subset
of \langname{} programs can be separated in this way. We say that programs which
can be properly staged are \emph{serializable}.

A program is serializable if and only if the names referred to in compile (resp.
run) time contexts are bound in compile (resp.\ run) time contexts. An expression
is in a compile-time context if it appears in the first argument to \alist{},
\hidx{}, \oidx{}, or \nscalar{}. Otherwise, it is in a run-time context. We
consider the relations in $\delta$ to be bound in a compile-time context and
query parameters to be bound in a run-time context. The compiler uses a simple
type system that tracks the stage of each name in the program to check for
serializability.

Transforming a program into a serializable form is a key goal of our automatic
optimizer (\Cref{sec:optimizer}). Many of the rules that the optimizer applies
can be seen as moving parts of the query between stages. 


\section{Transformations}\label{sec:rules}

In this section, we define semantics preserving transformation rules that
optimize query and layout performance. These rules change the behavior of the
program with respect to the layout and runtime semantics while preserving it
with respect to the relational semantics. These rules subsume standard query
optimizations because in addition to changing the structure of the query, they
can also change the structure of the data that the query processes.


\subsection{Notation}
Transformations are written as inference rules. When writing inference rules,
$e$ will refer to scalar expressions and $q$ will refer to \langname{}
expressions. $E$ and $Q$ will refer to lists of expressions and layouts. In
general, the names we use correspond to those used in the syntax description
(\Cref{fig:syntax}). If we need to refer to a piece of concrete syntax, it will
be formatted as e.g., $\textsf{concat}$ or $\textsf{x}$.

To avoid writing many trivial inductive rules, we define contexts. The
definition is straightforward, so we leave it to the appendix
(\Cref{fig:contexts}). If $C$ is a context and $q$ is a \langname{} expression,
then $C[q]$ is the expression obtained by substituting $q$ into the hole in $C$.
In addition to contexts, we define two operators: $\cxform{}$ and $->$. $q
\cxform{} q'$ means that the \langname{} expression $q$ can be transformed into
$q'$ and $q -> q'$ means that $q$ can be transformed into $q'$ in any context.
The relationship between these two operators is:
\begin{small}\[q -> q' \equiv \forall C.\ C[q] \cxform{} C[q']\]\end{small}
\vspace{-3ex}

\subsection{Relational Optimization}\label{sec:relational-xform}
There is a broad class of query transformations that have been developed in the
query optimization literature~\cite{Jarke1984,Chaudhuri1998}. These
transformations can generally be applied directly in \projname{}, at least to
the relational operators. For example, commuting and reassociating joins, filter
pushing and hoisting, and splitting and merging filter and join predicates are
implemented in \projname{}. Although producing optimal relational algebra
implementations of a query is explicitly a non-goal of \projname{}, these kinds
of transformations are important for exposing layout optimizations.

\subsection{Projection}\label{sec:xform-proj}
Projection, or the removal of unnecessary fields from a query, is an important
transformation because many queries only use a small number of fields; the most
impactful layout specialization that can be performed for these queries is to
remove unneeded fields.

First, we need to decide what fields are necessary. For a query $q$ in some
context $C$, the necessary fields in $q$ are visible in the output of $C[q]$ or
are referred to in $C$. Let $\textsc{schema}(\cdot)$ be a function from a layout
$q$ to the set of field names in the output of $q$. Let $\textsc{names}(\cdot)$
be a function which returns the set of names in a context or layout expression.
Let $\textsc{needed}(\cdot,\cdot)$ be a function from contexts $C$ and layouts
$q$ to the set of necessary fields in the output of $q$:
\[
  \textsc{needed}(C, q) = \textsc{schema}(q) \cap (\textsc{schema}(C[q])
  \cup\textsc{names}(C))
\]

$\textsc{needed}(\cdot, \cdot)$ can be used to define transformations which
remove unnecessary parts of a layout. For example, this rule removes unnecessary
fields from tuples:
\begin{small}
\[
  \inference{
    Q' = [q ~|~ q\in Q, \textsc{needed}(C, q) \neq \emptyset] 
  }{
    C[\atuple{}(Q)] \cxform{} C[\atuple{}(Q')]
  }
\]
\end{small}
There is a similar rule for \textsf{select} and \textsf{groupby} operators.

The projection rules differ from the others in this section because they refer
to the context $C$. The other rules can be applied in any context. The context
is important for the projection rules because without it, all the fields in a
layout would be visible and therefore ``necessary''. Referring to the context
allows us to determine which fields are visible to the user.

\subsection{Precomputation}\label{sec:xform-precomp}
A simple transformation that can improve query performance is to compute and
store the values of parameter-free terms. This transformation is similar to
partial evaluation. The following rule\footnote{Some of the rules make a
  distinction for \emph{parameter-free} expressions, which do not contain query
  parameters. In these rules, parameter-free expressions are denoted as
  $\static{e}$. } precomputes a static \langname{} expression:
\begin{small}
\[
  \inference{
    \textsc{schema}(\static{q}) = [f_1, \dots, f_k] & x\ \text{is fresh}
  }{
    \static{q} -> \alist{}(\as{\static{q}}{x}, \ctuple{[\scalar{x.f_1}, \dots, \scalar{x.f_k}]})
  }
\]
\end{small}

Hoisting static expressions out of predicates can also be very profitable:
\begin{small}
\[
  \inference{
    x, y\ \text{are fresh} \quad
    \static{e'}\ \text{is a term in}\ e \quad
    \textsc{names}(e') \cap \textsc{schema}(q) = \emptyset
  }{
    \filter{}(e, q) -> \textsf{depjoin}(\as{\scalar{e' \mapsto y}}{x}, \filter{}(e[e' := x.y], q))
}.
\]
\end{small}
If the expression $e'$ can be precomputed and stored instead of being recomputed
for every invocation of the filter. Similar transformations can be applied to
any operator that contains an expression. This rule is useful when the filter
appears inside a layout operator. For example, in $\alist{}(\as{q}{x},
\textsf{filter}(e, q'))$, an expression $e'$ can be hoisted out of the filter if
it refers to the fields in $q$ but not if it refers to the fields in $q'$.

In a similar vein, \select{} operators can be partially precomputed. For
example:
\begin{small}
\[
  \inference{
    y'\ \text{is fresh} & q_v' = \select{}(\{\textsf{sum}(e) \mapsto y'\}, q_v)
  }{
    \begin{aligned}
    &\select{}(\{\textsf{sum}(e) \mapsto y\}, \oidx{}(\as{q_k}{x}, q_v, t_{lo}, t_{hi})) -> \\
    &\quad\select{}(\{\textsf{sum}(y') \mapsto y\}, \oidx{}(\as{q_k}{x}, q_v', t_{lo}, t_{hi}))
  \end{aligned}
}.
\]
\end{small}
After this transformation, the ordered index will contain partial sums which
will be aggregated by the outer select. This rule is particularly useful when
implementing grouping and filtering queries, because the filter can be replaced
by an index and the aggregate applied to the contents of the index. A similar
rule also applies to \select{} and \alist{}. A simple version of this rule
applies to \hidx{}; in that case, the outer \select{} is unnecessary.

This transformation is combined with group-by elimination
(\Cref{sec:xform-part}) in TPC-H query 1 to construct a layout that precomputes
most of the aggregation.

\subsection{Partitioning}\label{sec:xform-part}

Partitioning is a fundamental layout transformation that splits one layout into
many layouts based on the value of a field or expression. A partition of a
relation $r$ is defined by an expression $e$ over the fields in $r$. Tuples in
$r$ are in the same partition if and only if evaluating $e$ over their fields
gives the same value.

Let $\partition{\cdot}{\cdot}{\cdot}$ be a function which takes a layout $q$, a
partition expression $e$, and a name $x$, and returns a query for the partition
keys and a query for the partitions:
\begin{small}
\[
  \partition{q}{\static{e}}{x} = (\textsf{dedup}(\textsf{select}(\static{e},
    q)),\ \textsf{filter}(x.e = \static{e}, q)).
\]
\end{small}
In this definition, $q_k$ evaluates to the unique valuations of $e$ in $r$. These are
the partition keys. Note that the expression $q_v$ contains a free scope $x$. We
use $x.e$ to denote the expression $e$ with its names qualified by the scope
$x$. Once $x.e$ is bound to a particular partition key, $q_v$ evaluates to a
relation containing only tuples in that partition.

The partition function is used to define rules that create hash indexes and
ordered indexes from filters:
\begin{small}
\[
  \inference{
    x, n\ \text{is fresh} &
    \partition{q}{\static{e}}{x} = (q_k, q_v)
  }{
    \textsf{filter}(\static{e} = e', q) -> \hidx{}(\as{q_k}{x}, q_k, q_v, e')
  },
\]
\end{small}
\begin{small}
\[
  \inference{
    x\ \text{is fresh} &
    \partition{q}{\static{e}}{x} = (q_k, q_v)
  }{
    \textsf{filter}(e_l \leq \static{e} \land \static{e} \leq e_h, q) -> \oidx{}(\as{q_k}{x}, q_k, q_v, e_l, e_h)
  }.
\]
\end{small}

Partitioning also leads immediately to a rule that eliminates
$\groupby{}(\cdot)$:
\begin{small}
\[
  \inference{
    x\ \text{is fresh} &
    \partition{q}{\static{K}}{x} = (q_k, q_v)
  }{
    \groupby{}(\static{K}, E, q) -> \alist{}(\as{q_k}{x}, \select{}(E, q_v))
  }.
\]
\end{small}
There is a slight abuse of notation in this rule. $K$ is a list of expressions,
so the filter in $q_v$ must have an equality check for each expression in $K$.
This group-by elimination rule is used in many of the TPC-H queries which
contain \groupby{}s.

\subsection{Join Elimination}\label{sec:xform-join}

\projname{}'s layout operators admit several options for join materialization.
Since joins are often the most expensive operations in a relational query,
choosing a good join materialization strategy is critical. \projname{} does not
suggest a join strategy but it provides the necessary tools for an expert user.

Partitioning can be used to implement join materialization: a powerful
transformation that can significantly reduce the computation required to run a
query, at the cost of increasing the size of the data that the query runs on.
Our layout language allows for several join materialization strategies.

For example, joins can be materialized as a list of pairs:
\begin{small}
\[
  \inference{
    x\ \text{is fresh} \quad
    \partition{q}{\static{e}}{x} = (q_k, q_v) \quad
    \partition{q'}{\static{e'}}{x} = (\cdot, q_v')
  }{
    \textsf{join}(\static{e} = \static{e'}, q, q') ->
    \alist{}(\as{q_k}{x}, \ctuple{[q_v, q_v']})
  }.
\]
\end{small}
Each pair in this layout contains the tuples that should join together from the
left- and right-hand-sides of the join.

Joins can also be materialized as nested lists:
\begin{small}
\[
  \inference{
    x\ \text{is fresh} \enspace
    \textsc{schema}(q) = [f_1, \dots, f_n] \enspace
    F = \scalar{f_1}, \dots, \scalar{f_n}
  }{
      \textsf{join}(e = e', q, q') -> 
      \alist{}(\as{q}{x},\ \ctuple{[F,\ \filter{}(x.e = e', q')]})
  }.
\]
\end{small}
This layout works well for one-to-many joins, because it only stores each
row from the left hand side of the join once, regardless of the number of
matching rows on the right hand side.

Or, joins can be materialized as a list and a hash table:
\begin{small}
\[
  \inference{
    x, x'\ \text{are fresh} & \partition{q'}{\static{e'}}{x'} = (q_k, q_v) & k = x.e
  }{
    \textsf{join}(e = \static{e'}, q, q') -> \textsf{depjoin}(\as{q}{x},
    \hidx{}(\as{q_k}{x'}, q_k, q_v, k))
}.
\]
\end{small}
This is similar to how a traditional database would implement a hash join, but
in our case the hash table is precomputed. Using a hash table adds some overhead
from the indirection and the hash function but avoids materializing the cross
product if the join result is large.

If the join is many-to-many with an intermediate table, then either of the
above one-to-many strategies can be applied.


\subsection{Predicate Precomputation}\label{sec:xform-pred-precomp}
In some queries, it is known in advance that a parameter will come from a
restricted domain. If this parameter is used as part of a filter or join
predicate, precomputing the result of running the predicate for the known
parameter space can be profitable, particularly when the predicate is expensive
to compute. Let $p$ be a query parameter and $D_p$ be the domain of values that
$p$ can assume.
\begin{small}
\[
  \inference{
    \textsc{relations}(q) = \{r\} & \textsc{params}(e) = \{p\} &
    w_i = e[p := D_p[i]] \\
    e' = \bigvee_{i} (w_i \land p = D_p[i]) \lor e &
    r'=\select{}([w_1, \dots, w_{|D_p|}, \dots], r)
  }{
    \filter{}(e, q) -> \filter{}(e', q[r := r'])
  }
\]
\end{small}
This rule generates an expression $w_i$ for each instantiation of the predicate
with a value from $D_p$. The $w_i$s are selected along with the original
relation $r$. When we later create a layout for $r$, the $w_i$s will be stored
alongside it. When the filter is executed, if the parameter $p$ is in $D_p$, the
or will short-circuit and the original predicate will not run. However, this
transformation is semantics preserving even if $D_p$ is underapproximate. If the
query receives an unexpected parameter, then it executes the original predicate
$e$. Note that in the revised predicate $e'$, $p = D_P[i]$ can be computed once
for each $i$, rather than once per invocation of the filter predicate.

We use this transformation on TPC-H queries 2 and 9 to eliminate expensive
string comparisons.

\subsection{Correctness}

To show that the semantics that we have outlined in \Cref{sec:semantics} are
sufficient to prove the correctness of nontrivial transformations, we prove the
correctness of the equality filter elimination rule (\Cref{sec:xform-part}) in
\Cref{sec:partition-proof}. Although we do not prove the correctness of all of
the rules, this example demonstrates that such proofs are possible.

In particular, since our notation mixes relational and layout constructs, even
transformations that manipulate both the run- and compile-time behavior of the
query are often local transformations, and are therefore simple to prove
correct.


\section{Compilation}\label{sec:compiler}

The result of applying the transformation rules is a program in the layout
algebra. This program is still quite declarative, so there is a significant
abstraction gap to cross before the program can be executed efficiently.
Compilation of \langname{} programs proceeds in three passes:

\stitle{Type Inference.} The type inference pass computes a \emph{layout type},
which contains information about the ranges of values in the layout. For
example, integers are abstracted using intervals, as are the numerators of fixed
point numbers. Note that every element in collections like lists and indexes
must be of the same type but tuples can contain elements of different types.

\stitle{Serialization.} The serialization pass generates a binary representation
of the layout, using information from the layout type to specialize the layout
to the data. Each of the layout operators has a binary serialization format
which is intended to (1) take up minimal space and (2) minimize the use of
pointers to preserve data locality.

\stitle{Code generation.} Query code is generated according to the compilation
strategy described in~\cite{Tahboub2018}. This is referred to as push-based, or
data-centric query evaluation. We found that using this strategy instead of a
traditional iterator model is critical for query performance. A syntax-directed
lowering pass transforms each query and layout operator into an imperative
intermediate representation, using the layout type to generate the layout
reading code. This IR is then lowered to LLVM IR, optimized, and compiled into
an executable that provides a command line interface to the query.


\section{Optimization}\label{sec:optimizer}

\Projname{} includes an automatic, cost guided optimizer for the \langname{}.
Given a query written in the relational algebra fragment, the optimizer searches
for a sequence of transformations that (1) makes the query serializable
(\Cref{sec:serializability}) and (2) minimizes the cost of executing the query.
The optimizer consists of two components: a transformation scheduling language
and a cost model for the \langname{}.

\stitle{Scheduling.} The space of transformation sequences is far too large for
an exhaustive search, so we write a schedule that only considers a subset of the
full space of transformations. We use a small domain specific language to
construct this schedule. This language is inspired by~\cite{Visser2005} and
provides combinators for sequencing, fix-points, and context selection. The
schedule captures some of the domain knowledge that we have about how to
optimize query layouts.

The optimizer schedule has four phases: join nest elimination, hash-index
introduction, ordered-index introduction, and precomputation. These phases are
not the entirety of the optimizer but they give a rough picture of its behavior.

The join nest elimination phase looks for unparameterized join nests and
replaces them with layouts. As discussed in \Cref{sec:xform-join}, there are
several ways to eliminate a join operator. The right choice depends on whether
the join is one-to-one or one-to-many. To eliminate a join nest, the optimizer
performs an exhaustive search using the join elimination rules and uses the cost
model to choose the least expensive candidate.

The hash- and ordered-index introduction phases attempt to replace filter
operators with indexes. When replacing a filter operator with an index, the most
important choice to make is where in the query to place the filter. This choice
determines which part of the layout the index will partition. The optimizer
makes this choice by first hoisting all of the candidate filters as far as
possible. It then pushes the filters, introducing an index at each position. The
cost model is used to select the best candidate.

Finally the precomputation phase selects parts of the query that can be computed
and stored. Other transformation rules are interleaved with these phases.

The output of the optimizer is a sequence of transformation rules that lower the
input query to a layout, minimizing the cost of executing the resulting query. A
pleasant feature of the optimizer is that because it simply schedules
transformation rules, it is semantics preserving if all of the rules are. This
means that all schedules are equally correct---they differ only in the quality
of their optimization.

\stitle{Cost Model.} The staged nature of the layout algebra makes evaluating
the cost of a query complicated. We use the layout type (\Cref{sec:compiler}) to
estimate the cost of evaluating a query. The layout type tells us the size of
the collections in the layout, and we use simple models of the costs of the
runtime query operators to estimate the cost of the entire query. Computing the
layout type is expensive, so we use a sample of the database for cost modeling
during optimization.


\section{Evaluation}
We compare \projname{} with three other systems: \hyper{}~\cite{Neumann2011},
\cozy{}~\cite{Loncaric2018}, and \chestnut{}~\cite{Yan2019} (see \Cref{sec:related}).

\hyper{} is an in-memory column-store which has a state-of-the art query
compiler. It implements compilation techniques (e.g.\ vectorization) that are
well outside the scope of this paper. We compare against \hyper{} in two modes:
with the original TPC-H data and with a transformed version of the data and
query that mimics the layout used by \projname{}. We compare against vanilla
\hyper{} to show that layout specialization is a powerful optimization that can
compensate for the many low-level compiler optimizations in \hyper{}. We compare
against \hyper{} with transformed data to show that the specialization
techniques that \projname{} uses are beneficial in other systems.

In the comparison with \hyper{}, the \projname{} results are split into two
categories: expert generated queries and optimizer generated queries. In both
cases we start with a direct translation of the SQL implementation of each query
into the \langname{}. For the expert queries we hand-selected a sequence of
transformations that generates an efficient, serializable version of the query.
For the optimized queries, the optimizer (\Cref{sec:optimizer}) searches over
the space of transformation sequences, using its cost model to evaluate
candidates. In some cases, the optimizer fails to find a serializable candidate,
so there is no bar in the plot.

\cozy{} is a state-of-the-art program synthesis tool that generates specialized
data structures from relational queries. \chestnut{} is a tool for synthesizing
specialized data structures for object queries. We run both on TPC-H and compare
with \projname{}.

\subsection{TPC-H Analytics Benchmark}

TPC-H is a standard database benchmark, focusing on analytics queries. It
consists of a data generator, 22 query templates, and a query generator which
instantiates the templates. The queries in TPC-H are inherently parametric, and
their parameters come from the domains defined by the query generator. To build
our benchmark, we took the query templates from TPC-H and encoded them as
\projname{} programs. It is important that the queries be parametric.
Specializing non-parametric queries is uninteresting; a non-parametric query can
be evaluated and the result stored.

TPC-H is a general purpose benchmark, so it exercises a variety of SQL
primitives. We chose not to implement all of these primitives in \projname{},
not because they would be prohibitively difficult, but because they are not
directly related to the layout specialization problem. In particular,
\projname{} does not support executing \orderby{}, \groupby{}, \join{}, or
\dedup{} operators at runtime\footnote{These operators can be processed into the
compiled form of the query.}, and it does not support limit clauses at all.
Some of these operators can be replaced by layout specialization, but others
cannot. We implemented the first 19 queries in TPC-H. Of these queries, we
dropped query 13 because it contains an outer join and removed runtime ordering
and limit clauses from five other queries.

\subsection{Results}

\begin{figure}
  \includegraphics[width=\columnwidth]{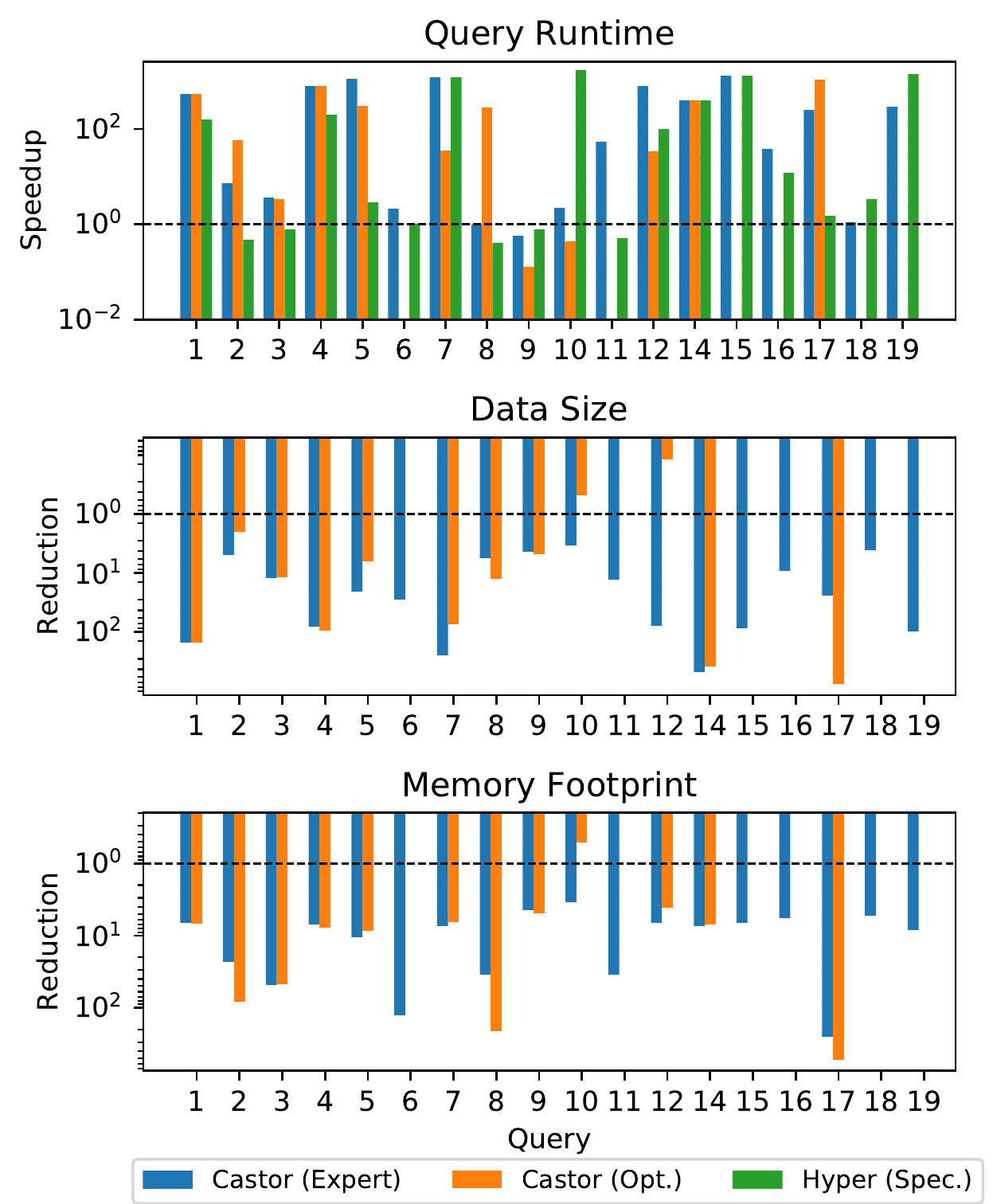}
  \caption{Performance on TPC-H queries.}\label{fig:perf}
  \vspace{-3ex}
\end{figure}


When evaluating the TPC-H queries, we used the 1Gb scale factor. We ran our
benchmarks on an Intel\textregistered{} Xeon\textregistered{} E5--2470 with 100Gb
of memory.

\stitle{Runtime.} \Cref{fig:perf} shows the speedup over baseline \hyper{}. The
query runtime numbers in \Cref{tbl:tpch} show that the layouts and query code
generated by \projname{} from expert queries are faster or significantly faster
than \hyper{} for 15 out of 18 queries. In the cases where \projname{} is slower
than \hyper{}, no query is more than 2x slower.

If \hyper{} is given specialized views and indexes, then its performance is on
par with \projname{}. However, constructing and maintaining these views takes
effort, and \hyper{} cannot assist the user in creating a collection of views
which maintains the semantics of the original query.

\stitle{Layout Size.} We recorded the size of the layouts that
\projname{} produced and compared them to the size of \hyper{}'s specialized
views and indexes. \Cref{fig:perf} shows that \projname{}'s specialized layouts
are much smaller than \hyper{}'s, even if \hyper{} uses the same
kind of data removal and indexing optimizations.

The layouts were generally small---less than 10Mb for 10 out of 18 queries and
less than 100Mb for all but one query. The original data set, the output of the
TPC-H data generator, is 1.1Gb. The size difference between \projname{}'s
layouts and the original data supports the hypothesis that queries, even
parameterized queries, rely on fairly small subsets of the whole database,
making layout specialization a profitable optimization.

\stitle{Memory Use.} We also measured the peak memory use of the query process
for \hyper{} and for \projname{}. \hyper{} consistently uses approximately the
same amount of memory as the layout size. In some cases it uses more, presumably
because it has large runtime dependencies like LLVM.\@ In contrast,
\Cref{fig:perf} shows that \projname{}'s peak memory use is significantly lower
than \hyper{} for all expert queries.

\stitle{\cozy{}.} We transformed our input queries into \cozy{}'s specification
format and ran \cozy{} with a 6 hour timeout. In this configuration, we found
that \cozy{} was unable to make significant improvements on all but two of the
TPC-H queries. In Q4, \cozy{} precomputed one of the joins and a filter. In Q17,
\cozy{} added an index. We ran both of these queries and found that despite the
optimization Q4 was slower than baseline \hyper{} at 5.4s and Q17 was too slow
to run on the entire TPC-H dataset. Although \cozy{} is effective at
synthesizing data structures from small relational specifications, the size of
the TPC-H queries causes a significant slowdown in its solver-based verification
step.

\stitle{\chestnut{}.} We attempted to use \chestnut{} to optimize four of the
TPC-H queries, but we were unable to build and run the generated code. Manually
examining the selected layouts for Q1 and Q3--6, we find that \chestnut{} uses
projection and indexes in many of the same places that \projname{} does, but
misses some optimizations that \projname{} can take advantage of, such as
aggregate precomputation (\Cref{sec:xform-precomp}).

\subsection{Optimizer}

The performance of the automatic optimizer is mixed. In some cases, it generates
queries that perform similarly or better than the expertly tuned queries. In
others, it fails to find a transformation sequence that generates a serializable
query, it generates a serializable query that uses a layout that is too large to
compile in a reasonable amount of time, or it generates a query which performs
poorly.

In most of the cases where the optimizer fails to find a solution, it fails
because the cost model provided an inaccurate estimate of the performance of the
candidate. In the cost model, we only consider the time required to run the
query, not the size of the layout. This avoids a complex multi-variable
optimization problem, but it means that the optimizer sometimes generates
layouts that are too large to be compiled efficiently.

Despite the mixed performance of the optimizer, it provides a good starting
point for an expert implementation, and it demonstrates that automatically
optimizing programs in the \langname{} is feasible.

\subsection{Summary}

We showed that \projname{} produces artifacts that are competitive in
performance and in size with a state-of-the-art in-memory database. These
results show that database compilation is a compelling technique for improving
query performance on static and slowly changing datasets.


\section{Related Work}\label{sec:related}

\stitle{Deductive Synthesis.} There is a long line of work that uses deductive
synthesis and program transformation rules to optimize
programs~\cite{Blaine1998,Puschel2005}, to generate data structure
implementations~\cite{Delaware2015}, and to build performance
DSLs~\cite{Ragan-Kelley2013,Sujeeth2014}. \projname{} is a part of this line of
work: it is a performance DSL which uses deduction rules to generate and
optimize layouts. However its focus on particular data sets and on deduction
rules to optimize data in addition to programs separates it from previous work.

\stitle{Data Representation Synthesis.} The layout optimization problem is
similar to the problem of synthesizing a data structure that corresponds to a
relational
specification~\cite{Hawkins2010,Hawkins2011,Loncaric2016,Loncaric2018,Sujeeth2014}.
\projname{} considers a restricted version of the data structure synthesis
problem where the query and the dataset are known to the compiler, which allows
\projname{} to use optimizations which would not be safe if the data was not
known. The best data structure synthesis tool---\cozy{}---uses an SMT solver to
verify candidates, which does not scale to the TPC-H queries. \projname{}'s use
of deduction rules avoids this costly verification step.

\stitle{Database Storage.} Traditional databases are mostly row-based.
Column-based database systems (e.g., \hyper{}~\cite{Neumann2011},
MonetDB~\cite{Boncz1999} and C-Store~\cite{Stonebraker2005a}) are popular for
OLAP applications, outperforming row-based approaches by orders of magnitude.
However, the existing work on database storage generally considers specific
storage optimizations (e.g., \cite{AilamakiDHS01}), rather than languages for
expressing diverse storage options. In this vein is
RodentStore~\cite{Cudre-Mauroux2009}, which proposed a language to express rich
types of storage layouts and showed that different layouts could benefit
different applications. However, a compiler was never developed to create the
layouts from this language; the paper demonstrated its point by implementing
each layout by hand. Also related is \chestnut{}: a tool for generating
specialized layouts for object queries~\cite{Yan2019}. \chestnut{} has separate
layout and query languages and synthesizes a query after choosing the layout.
This limits its ability to use transformations that change the data, like
predicate precomputation (\Cref{sec:xform-pred-precomp}).

There have also been studies of physical layouts for other types of data, such
as for scientific data~\cite{Stonebraker12}, and geo-spatial
data~\cite{GutierrezB07}. Although not directly comparable, we hope that
\projname{} can be extended to support those data types.

\stitle{Materialized View and Index Selection.} The layouts that \projname{}
generates are similar to materialized views, in that they store query results.
\projname{} also generates layouts which contain indexes. Several problems
related to the use of materialized views and indexes have been studied
(see~\cite{Halevy01} for a survey): (1) the view storage problem that decides
which views need to be materialized~\cite{ChirkovaG00}, (2) the view selection
problem that selects view(s) that can answer a given query, (3) the query
rewriting problem that rewrites the given query based on the selected
view(s)~\cite{PottingerL00}, (4) the index selection problem that selects an
appropriate set of indexes for a
query~\cite{Stonebraker74,Gupta1997,Bruno2005,Talebi2008}. However, materialized
views are restricted to being flat relations. The layout space that \projname{}
supports is much richer than that supported by materialized views and indexes.
In addition, the view selection literature has not previously considered the
problem of generating execution plans for chosen views and indexes.

\stitle{Query Compilation.} \projname{} uses techniques from the query
compilation literature~\cite{Tahboub2018,Shaikhha2016,Klonatos2014}. It extends
these techniques by using information about the layout to specialize its
queries.


\section{Conclusion}

We have presented \projname{}, a domain specific language for expressing a wide
variety of physical database designs, and a compiler for this language. We have
evaluated it empirically and shown that it is competitive with the
state-of-the-art in memory database systems.


\bibliography{citations}


\begin{thebibliography}{39}


\ifx \showCODEN    \undefined \def \showCODEN     #1{\unskip}     \fi
\ifx \showDOI      \undefined \def \showDOI       #1{#1}\fi
\ifx \showISBNx    \undefined \def \showISBNx     #1{\unskip}     \fi
\ifx \showISBNxiii \undefined \def \showISBNxiii  #1{\unskip}     \fi
\ifx \showISSN     \undefined \def \showISSN      #1{\unskip}     \fi
\ifx \showLCCN     \undefined \def \showLCCN      #1{\unskip}     \fi
\ifx \shownote     \undefined \def \shownote      #1{#1}          \fi
\ifx \showarticletitle \undefined \def \showarticletitle #1{#1}   \fi
\ifx \showURL      \undefined \def \showURL       {\relax}        \fi
\providecommand\bibfield[2]{#2}
\providecommand\bibinfo[2]{#2}
\providecommand\natexlab[1]{#1}
\providecommand\showeprint[2][]{arXiv:#2}

\bibitem[\protect\citeauthoryear{Ailamaki, DeWitt, Hill, and
  Skounakis}{Ailamaki et~al\mbox{.}}{2001}]%
        {AilamakiDHS01}
\bibfield{author}{\bibinfo{person}{Anastassia Ailamaki},
  \bibinfo{person}{David~J. DeWitt}, \bibinfo{person}{Mark~D. Hill}, {and}
  \bibinfo{person}{Marios Skounakis}.} \bibinfo{year}{2001}\natexlab{}.
\newblock \showarticletitle{Weaving Relations for Cache Performance}. In
  \bibinfo{booktitle}{\emph{VLDB}}. \bibinfo{pages}{169--180}.
\newblock


\bibitem[\protect\citeauthoryear{Blaine, Gilham, Liu, Smith, and
  Westfold}{Blaine et~al\mbox{.}}{1998}]%
        {Blaine1998}
\bibfield{author}{\bibinfo{person}{Lee Blaine}, \bibinfo{person}{Limei Gilham},
  \bibinfo{person}{Junbo Liu}, \bibinfo{person}{Douglas~R. Smith}, {and}
  \bibinfo{person}{Stephen Westfold}.} \bibinfo{year}{1998}\natexlab{}.
\newblock \showarticletitle{Planware-Domain-Specific Synthesis of
  High-Performance Schedulers}. In \bibinfo{booktitle}{\emph{Automated
  {{Software Engineering}}, 1998. {{Proceedings}}. 13th {{IEEE International
  Conference}} On}}. \bibinfo{publisher}{{IEEE}}, \bibinfo{pages}{270--279}.
\newblock


\bibitem[\protect\citeauthoryear{Boncz and Kersten}{Boncz and Kersten}{1999}]%
        {Boncz1999}
\bibfield{author}{\bibinfo{person}{Peter~A. Boncz} {and}
  \bibinfo{person}{Martin~L. Kersten}.} \bibinfo{year}{1999}\natexlab{}.
\newblock \showarticletitle{{{MIL Primitives}} for {{Querying}} a {{Fragmented
  World}}}.
\newblock \bibinfo{journal}{\emph{The VLDB Journal}} \bibinfo{volume}{8},
  \bibinfo{number}{2} (\bibinfo{date}{Oct.} \bibinfo{year}{1999}),
  \bibinfo{pages}{101--119}.
\newblock
\showISSN{1066-8888}
\urldef\tempurl%
\url{https://doi.org/10.1007/s007780050076}
\showDOI{\tempurl}


\bibitem[\protect\citeauthoryear{Botelho, Pagh, and Ziviani}{Botelho
  et~al\mbox{.}}{2007}]%
        {Botelho2007}
\bibfield{author}{\bibinfo{person}{Fabiano~C. Botelho}, \bibinfo{person}{Rasmus
  Pagh}, {and} \bibinfo{person}{Nivio Ziviani}.}
  \bibinfo{year}{2007}\natexlab{}.
\newblock \showarticletitle{Simple and Space-Efficient Minimal Perfect Hash
  Functions}. In \bibinfo{booktitle}{\emph{Algorithms and {{Data Structures}}:
  10th {{International Workshop}}, {{WADS}} 2007}}
  \emph{(\bibinfo{series}{Theoretical Computer Science and General Issues})},
  Vol.~\bibinfo{volume}{4619}. \bibinfo{publisher}{{Springer}},
  \bibinfo{address}{Halifax, Canada}, \bibinfo{pages}{139--150}.
\newblock


\bibitem[\protect\citeauthoryear{Bruno and Chaudhuri}{Bruno and
  Chaudhuri}{2005}]%
        {Bruno2005}
\bibfield{author}{\bibinfo{person}{Nicolas Bruno} {and}
  \bibinfo{person}{Surajit Chaudhuri}.} \bibinfo{year}{2005}\natexlab{}.
\newblock \showarticletitle{Automatic {{Physical Database Tuning}}: {{A
  Relaxation}}-Based {{Approach}}}. In \bibinfo{booktitle}{\emph{Proceedings of
  the 2005 {{ACM SIGMOD International Conference}} on {{Management}} of
  {{Data}}}} \emph{(\bibinfo{series}{SIGMOD '05})}. \bibinfo{publisher}{{ACM}},
  \bibinfo{address}{New York, NY, USA}, \bibinfo{pages}{227--238}.
\newblock
\showISBNx{978-1-59593-060-6}
\urldef\tempurl%
\url{https://doi.org/10.1145/1066157.1066184}
\showDOI{\tempurl}


\bibitem[\protect\citeauthoryear{Chaudhuri}{Chaudhuri}{1998}]%
        {Chaudhuri1998}
\bibfield{author}{\bibinfo{person}{Surajit Chaudhuri}.}
  \bibinfo{year}{1998}\natexlab{}.
\newblock \showarticletitle{An {{Overview}} of {{Query Optimization}} in
  {{Relational Systems}}}. In \bibinfo{booktitle}{\emph{Proceedings of the
  {{Seventeenth ACM SIGACT}}-{{SIGMOD}}-{{SIGART Symposium}} on {{Principles}}
  of {{Database Systems}}}} \emph{(\bibinfo{series}{PODS '98})}.
  \bibinfo{publisher}{{ACM}}, \bibinfo{address}{New York, NY, USA},
  \bibinfo{pages}{34--43}.
\newblock
\showISBNx{978-0-89791-996-8}
\urldef\tempurl%
\url{https://doi.org/10.1145/275487.275492}
\showDOI{\tempurl}


\bibitem[\protect\citeauthoryear{Cheung, {Solar-Lezama}, and Madden}{Cheung
  et~al\mbox{.}}{2013}]%
        {Cheung2013}
\bibfield{author}{\bibinfo{person}{Alvin Cheung}, \bibinfo{person}{Armando
  {Solar-Lezama}}, {and} \bibinfo{person}{Samuel Madden}.}
  \bibinfo{year}{2013}\natexlab{}.
\newblock \showarticletitle{Optimizing Database-Backed Applications with Query
  Synthesis}.
\newblock \bibinfo{journal}{\emph{ACM SIGPLAN Notices}} \bibinfo{volume}{48},
  \bibinfo{number}{6} (\bibinfo{year}{2013}), \bibinfo{pages}{3--14}.
\newblock
\urldef\tempurl%
\url{http://dl.acm.org/citation.cfm?id=2462180}
\showURL{%
\tempurl}


\bibitem[\protect\citeauthoryear{Chirkova and Genesereth}{Chirkova and
  Genesereth}{2000}]%
        {ChirkovaG00}
\bibfield{author}{\bibinfo{person}{Rada Chirkova} {and}
  \bibinfo{person}{Michael~R. Genesereth}.} \bibinfo{year}{2000}\natexlab{}.
\newblock \showarticletitle{Linearly Bounded Reformulations of Conjunctive
  Databases}. In \bibinfo{booktitle}{\emph{Computational Logic - {CL} 2000,
  First International Conference, London, UK, 24-28 July, 2000, Proceedings}}.
  \bibinfo{pages}{987--1001}.
\newblock
\urldef\tempurl%
\url{https://doi.org/10.1007/3-540-44957-4\_66}
\showDOI{\tempurl}


\bibitem[\protect\citeauthoryear{Codd}{Codd}{1970}]%
        {Codd1970}
\bibfield{author}{\bibinfo{person}{E.~F. Codd}.}
  \bibinfo{year}{1970}\natexlab{}.
\newblock \showarticletitle{A Relational Model of Data for Large Shared Data
  Banks}.
\newblock \bibinfo{journal}{\emph{Commun. ACM}} \bibinfo{volume}{13},
  \bibinfo{number}{6} (\bibinfo{date}{June} \bibinfo{year}{1970}),
  \bibinfo{pages}{377--387}.
\newblock
\showISSN{0001-0782}
\urldef\tempurl%
\url{https://doi.org/10.1145/362384.362685}
\showDOI{\tempurl}


\bibitem[\protect\citeauthoryear{Codd}{Codd}{1971}]%
        {Codd1971}
\bibfield{author}{\bibinfo{person}{Edgar~F. Codd}.}
  \bibinfo{year}{1971}\natexlab{}.
\newblock \showarticletitle{A Data Base Sublanguage Founded on the Relational
  Calculus}. In \bibinfo{booktitle}{\emph{Proceedings of the 1971 {{ACM
  SIGFIDET}} (Now {{SIGMOD}}) {{Workshop}} on {{Data Description}}, {{Access}}
  and {{Control}}}}. \bibinfo{publisher}{{ACM}}, \bibinfo{pages}{35--68}.
\newblock


\bibitem[\protect\citeauthoryear{Council}{Council}{2008}]%
        {Council2008}
\bibfield{author}{\bibinfo{person}{Transaction Processing~Performance
  Council}.} \bibinfo{year}{2008}\natexlab{}.
\newblock \showarticletitle{{{TPC}}-{{H}} Benchmark Specification}.
\newblock   \bibinfo{volume}{21} (\bibinfo{year}{2008}),
  \bibinfo{pages}{592--603}.
\newblock


\bibitem[\protect\citeauthoryear{{Cudre-Mauroux}, Wu, and
  Madden}{{Cudre-Mauroux} et~al\mbox{.}}{2009}]%
        {Cudre-Mauroux2009}
\bibfield{author}{\bibinfo{person}{Philippe {Cudre-Mauroux}},
  \bibinfo{person}{Eugene Wu}, {and} \bibinfo{person}{Sam Madden}.}
  \bibinfo{year}{2009}\natexlab{}.
\newblock \showarticletitle{The {{Case}} for {{RodentStore}}, an {{Adaptive}},
  {{Declarative Storage System}}}. In \bibinfo{booktitle}{\emph{{{CIDR}}}}.
\newblock
\showeprint[arxiv]{0909.1779}
\urldef\tempurl%
\url{http://arxiv.org/abs/0909.1779}
\showURL{%
\tempurl}


\bibitem[\protect\citeauthoryear{{Davi de Castro Reis}, {Djamel Belazzougui},
  {Fabiano Cupertino Botelho}, and {Nivio Ziviani}}{{Davi de Castro Reis}
  et~al\mbox{.}}{2011}]%
        {DavideCastroReis2011}
\bibfield{author}{\bibinfo{person}{{Davi de Castro Reis}},
  \bibinfo{person}{{Djamel Belazzougui}}, \bibinfo{person}{{Fabiano Cupertino
  Botelho}}, {and} \bibinfo{person}{{Nivio Ziviani}}.}
  \bibinfo{year}{2011}\natexlab{}.
\newblock \bibinfo{title}{{{CMPH}}: {{C}} Minimal Perfect Hashing Library}.
\newblock
\newblock
\urldef\tempurl%
\url{http://cmph.sourceforge.net}
\showURL{%
\tempurl}


\bibitem[\protect\citeauthoryear{Delaware, {Pit-Claudel}, Gross, and
  Chlipala}{Delaware et~al\mbox{.}}{2015}]%
        {Delaware2015}
\bibfield{author}{\bibinfo{person}{Benjamin Delaware},
  \bibinfo{person}{Cl\'ement {Pit-Claudel}}, \bibinfo{person}{Jason Gross},
  {and} \bibinfo{person}{Adam Chlipala}.} \bibinfo{year}{2015}\natexlab{}.
\newblock \showarticletitle{Fiat: {{Deductive}} Synthesis of Abstract Data
  Types in a Proof Assistant}. In \bibinfo{booktitle}{\emph{{{ACM SIGPLAN
  Notices}}}}, Vol.~\bibinfo{volume}{50}. \bibinfo{publisher}{{ACM}},
  \bibinfo{pages}{689--700}.
\newblock


\bibitem[\protect\citeauthoryear{Graefe}{Graefe}{1994}]%
        {Graefe1994}
\bibfield{author}{\bibinfo{person}{Goetz Graefe}.}
  \bibinfo{year}{1994}\natexlab{}.
\newblock \showarticletitle{Volcano/Spl Minus/an Extensible and Parallel Query
  Evaluation System}.
\newblock \bibinfo{journal}{\emph{IEEE Transactions on Knowledge and Data
  Engineering}} \bibinfo{volume}{6}, \bibinfo{number}{1}
  (\bibinfo{year}{1994}), \bibinfo{pages}{120--135}.
\newblock


\bibitem[\protect\citeauthoryear{Gupta, Harinarayan, Rajaraman, and
  Ullman}{Gupta et~al\mbox{.}}{1997}]%
        {Gupta1997}
\bibfield{author}{\bibinfo{person}{H. Gupta}, \bibinfo{person}{V. Harinarayan},
  \bibinfo{person}{A. Rajaraman}, {and} \bibinfo{person}{J.~D. Ullman}.}
  \bibinfo{year}{1997}\natexlab{}.
\newblock \showarticletitle{Index Selection for {{OLAP}}}. In
  \bibinfo{booktitle}{\emph{Proceedings 13th {{International Conference}} on
  {{Data Engineering}}}}. \bibinfo{publisher}{{IEEE Computer Society}},
  \bibinfo{address}{Junglee Corp., Palo Alto, CA.}, \bibinfo{pages}{208--219}.
\newblock
\urldef\tempurl%
\url{https://doi.org/10.1109/ICDE.1997.581755}
\showDOI{\tempurl}


\bibitem[\protect\citeauthoryear{Guti{\'{e}}rrez and Baumann}{Guti{\'{e}}rrez
  and Baumann}{2007}]%
        {GutierrezB07}
\bibfield{author}{\bibinfo{person}{Ang{\'{e}}lica~Garc{\'{\i}}a
  Guti{\'{e}}rrez} {and} \bibinfo{person}{Peter Baumann}.}
  \bibinfo{year}{2007}\natexlab{}.
\newblock \showarticletitle{Modeling Fundamental Geo-Raster Operations with
  Array Algebra}. In \bibinfo{booktitle}{\emph{Workshops Proceedings of the 7th
  {IEEE} International Conference on Data Mining {(ICDM} 2007), October 28-31,
  2007, Omaha, Nebraska, {USA}}}. \bibinfo{pages}{607--612}.
\newblock
\urldef\tempurl%
\url{https://doi.org/10.1109/ICDMW.2007.53}
\showDOI{\tempurl}


\bibitem[\protect\citeauthoryear{Halevy}{Halevy}{2001}]%
        {Halevy01}
\bibfield{author}{\bibinfo{person}{Alon~Y. Halevy}.}
  \bibinfo{year}{2001}\natexlab{}.
\newblock \showarticletitle{Answering queries using views: {A} survey}.
\newblock \bibinfo{journal}{\emph{{VLDB} J.}} \bibinfo{volume}{10},
  \bibinfo{number}{4} (\bibinfo{year}{2001}), \bibinfo{pages}{270--294}.
\newblock
\urldef\tempurl%
\url{https://doi.org/10.1007/s007780100054}
\showDOI{\tempurl}


\bibitem[\protect\citeauthoryear{Hawkins, Aiken, Fisher, Rinard, and
  Sagiv}{Hawkins et~al\mbox{.}}{2010}]%
        {Hawkins2010}
\bibfield{author}{\bibinfo{person}{Peter Hawkins}, \bibinfo{person}{Alex
  Aiken}, \bibinfo{person}{Kathleen Fisher}, \bibinfo{person}{Martin Rinard},
  {and} \bibinfo{person}{Mooly Sagiv}.} \bibinfo{year}{2010}\natexlab{}.
\newblock \showarticletitle{Data {{Structure Fusion}}}. In
  \bibinfo{booktitle}{\emph{Programming {{Languages}} and {{Systems}}}}
  \emph{(\bibinfo{series}{Lecture Notes in Computer Science})}.
  \bibinfo{publisher}{{Springer, Berlin, Heidelberg}},
  \bibinfo{pages}{204--221}.
\newblock
\showISBNx{978-3-642-17163-5 978-3-642-17164-2}
\urldef\tempurl%
\url{https://doi.org/10.1007/978-3-642-17164-2_15}
\showDOI{\tempurl}


\bibitem[\protect\citeauthoryear{Hawkins, Aiken, Fisher, Rinard, and
  Sagiv}{Hawkins et~al\mbox{.}}{2011}]%
        {Hawkins2011}
\bibfield{author}{\bibinfo{person}{Peter Hawkins}, \bibinfo{person}{Alex
  Aiken}, \bibinfo{person}{Kathleen Fisher}, \bibinfo{person}{Martin Rinard},
  {and} \bibinfo{person}{Mooly Sagiv}.} \bibinfo{year}{2011}\natexlab{}.
\newblock \showarticletitle{Data {{Representation Synthesis}}}. In
  \bibinfo{booktitle}{\emph{Proceedings of the {{32Nd ACM SIGPLAN Conference}}
  on {{Programming Language Design}} and {{Implementation}}}}
  \emph{(\bibinfo{series}{PLDI '11})}. \bibinfo{publisher}{{ACM}},
  \bibinfo{address}{New York, NY, USA}, \bibinfo{pages}{38--49}.
\newblock
\showISBNx{978-1-4503-0663-8}
\urldef\tempurl%
\url{https://doi.org/10.1145/1993498.1993504}
\showDOI{\tempurl}


\bibitem[\protect\citeauthoryear{Jarke and Koch}{Jarke and Koch}{1984}]%
        {Jarke1984}
\bibfield{author}{\bibinfo{person}{Matthias Jarke} {and}
  \bibinfo{person}{Jurgen Koch}.} \bibinfo{year}{1984}\natexlab{}.
\newblock \showarticletitle{Query Optimization in Database Systems}.
\newblock \bibinfo{journal}{\emph{Comput. Surveys}} \bibinfo{volume}{16},
  \bibinfo{number}{2} (\bibinfo{date}{June} \bibinfo{year}{1984}),
  \bibinfo{pages}{111--152}.
\newblock
\showISSN{0360-0300}
\urldef\tempurl%
\url{https://doi.org/10.1145/356924.356928}
\showDOI{\tempurl}


\bibitem[\protect\citeauthoryear{Klonatos, Koch, Rompf, and Chafi}{Klonatos
  et~al\mbox{.}}{2014}]%
        {Klonatos2014}
\bibfield{author}{\bibinfo{person}{Yannis Klonatos}, \bibinfo{person}{Christoph
  Koch}, \bibinfo{person}{Tiark Rompf}, {and} \bibinfo{person}{Hassan Chafi}.}
  \bibinfo{year}{2014}\natexlab{}.
\newblock \showarticletitle{Building Efficient Query Engines in a High-Level
  Language}.
\newblock \bibinfo{journal}{\emph{Proceedings of the VLDB Endowment}}
  \bibinfo{volume}{7}, \bibinfo{number}{10} (\bibinfo{year}{2014}),
  \bibinfo{pages}{853--864}.
\newblock
\urldef\tempurl%
\url{http://dl.acm.org/citation.cfm?id=2732959}
\showURL{%
\tempurl}


\bibitem[\protect\citeauthoryear{Loncaric, Ernst, and Torlak}{Loncaric
  et~al\mbox{.}}{2018}]%
        {Loncaric2018}
\bibfield{author}{\bibinfo{person}{Calvin Loncaric},
  \bibinfo{person}{Michael~D. Ernst}, {and} \bibinfo{person}{Emina Torlak}.}
  \bibinfo{year}{2018}\natexlab{}.
\newblock \showarticletitle{Generalized Data Structure Synthesis}. In
  \bibinfo{booktitle}{\emph{Proceedings of the 40th {{International
  Conference}} on {{Software Engineering}}}}. \bibinfo{publisher}{{ACM}},
  \bibinfo{pages}{958--968}.
\newblock


\bibitem[\protect\citeauthoryear{Loncaric, Torlak, and Ernst}{Loncaric
  et~al\mbox{.}}{2016}]%
        {Loncaric2016}
\bibfield{author}{\bibinfo{person}{Calvin Loncaric}, \bibinfo{person}{Emina
  Torlak}, {and} \bibinfo{person}{Michael~D. Ernst}.}
  \bibinfo{year}{2016}\natexlab{}.
\newblock \showarticletitle{Fast {{Synthesis}} of {{Fast Collections}}}. In
  \bibinfo{booktitle}{\emph{Proceedings of the 37th {{ACM SIGPLAN Conference}}
  on {{Programming Language Design}} and {{Implementation}}}}
  \emph{(\bibinfo{series}{PLDI '16})}. \bibinfo{publisher}{{ACM}},
  \bibinfo{address}{New York, NY, USA}, \bibinfo{pages}{355--368}.
\newblock
\showISBNx{978-1-4503-4261-2}
\urldef\tempurl%
\url{https://doi.org/10.1145/2908080.2908122}
\showDOI{\tempurl}


\bibitem[\protect\citeauthoryear{Neumann}{Neumann}{2011}]%
        {Neumann2011}
\bibfield{author}{\bibinfo{person}{Thomas Neumann}.}
  \bibinfo{year}{2011}\natexlab{}.
\newblock \showarticletitle{Efficiently {{Compiling Efficient Query Plans}} for
  {{Modern Hardware}}}.
\newblock \bibinfo{journal}{\emph{Proc. VLDB Endow.}} \bibinfo{volume}{4},
  \bibinfo{number}{9} (\bibinfo{date}{June} \bibinfo{year}{2011}),
  \bibinfo{pages}{539--550}.
\newblock
\showISSN{2150-8097}
\urldef\tempurl%
\url{https://doi.org/10.14778/2002938.2002940}
\showDOI{\tempurl}


\bibitem[\protect\citeauthoryear{Pottinger and Levy}{Pottinger and
  Levy}{2000}]%
        {PottingerL00}
\bibfield{author}{\bibinfo{person}{Rachel Pottinger} {and}
  \bibinfo{person}{Alon~Y. Levy}.} \bibinfo{year}{2000}\natexlab{}.
\newblock \showarticletitle{A Scalable Algorithm for Answering Queries Using
  Views}. In \bibinfo{booktitle}{\emph{{VLDB} 2000, Proceedings of 26th
  International Conference on Very Large Data Bases, September 10-14, 2000,
  Cairo, Egypt}}. \bibinfo{pages}{484--495}.
\newblock
\urldef\tempurl%
\url{http://www.vldb.org/conf/2000/P484.pdf}
\showURL{%
\tempurl}


\bibitem[\protect\citeauthoryear{Puschel, Moura, Johnson, Padua, Veloso,
  Singer, Xiong, Franchetti, Gacic, Voronenko, Chen, Johnson, and
  Rizzolo}{Puschel et~al\mbox{.}}{2005}]%
        {Puschel2005}
\bibfield{author}{\bibinfo{person}{M. Puschel}, \bibinfo{person}{J.~M.~F.
  Moura}, \bibinfo{person}{J.~R. Johnson}, \bibinfo{person}{D. Padua},
  \bibinfo{person}{M.~M. Veloso}, \bibinfo{person}{B.~W. Singer},
  \bibinfo{person}{Jianxin Xiong}, \bibinfo{person}{F. Franchetti},
  \bibinfo{person}{A. Gacic}, \bibinfo{person}{Y. Voronenko},
  \bibinfo{person}{K. Chen}, \bibinfo{person}{R.~W. Johnson}, {and}
  \bibinfo{person}{N. Rizzolo}.} \bibinfo{year}{2005}\natexlab{}.
\newblock \showarticletitle{{{SPIRAL}}: {{Code Generation}} for {{DSP
  Transforms}}}.
\newblock \bibinfo{journal}{\emph{Proc. IEEE}} \bibinfo{volume}{93},
  \bibinfo{number}{2} (\bibinfo{date}{Feb.} \bibinfo{year}{2005}),
  \bibinfo{pages}{232--275}.
\newblock
\showISSN{0018-9219}
\urldef\tempurl%
\url{https://doi.org/10.1109/JPROC.2004.840306}
\showDOI{\tempurl}


\bibitem[\protect\citeauthoryear{{Ragan-Kelley}, Barnes, Adams, Paris, Durand,
  and Amarasinghe}{{Ragan-Kelley} et~al\mbox{.}}{2013}]%
        {Ragan-Kelley2013}
\bibfield{author}{\bibinfo{person}{Jonathan {Ragan-Kelley}},
  \bibinfo{person}{Connelly Barnes}, \bibinfo{person}{Andrew Adams},
  \bibinfo{person}{Sylvain Paris}, \bibinfo{person}{Fr\'edo Durand}, {and}
  \bibinfo{person}{Saman Amarasinghe}.} \bibinfo{year}{2013}\natexlab{}.
\newblock \showarticletitle{Halide: A Language and Compiler for Optimizing
  Parallelism, Locality, and Recomputation in Image Processing Pipelines}.
\newblock \bibinfo{journal}{\emph{ACM SIGPLAN Notices}} \bibinfo{volume}{48},
  \bibinfo{number}{6} (\bibinfo{year}{2013}), \bibinfo{pages}{519--530}.
\newblock


\bibitem[\protect\citeauthoryear{Rompf and Amin}{Rompf and Amin}{2015}]%
        {Rompf2015}
\bibfield{author}{\bibinfo{person}{Tiark Rompf} {and} \bibinfo{person}{Nada
  Amin}.} \bibinfo{year}{2015}\natexlab{}.
\newblock \showarticletitle{Functional {{Pearl}}: {{A SQL}} to {{C Compiler}}
  in 500 {{Lines}} of {{Code}}}. In \bibinfo{booktitle}{\emph{Proceedings of
  the 20th {{ACM SIGPLAN International Conference}} on {{Functional
  Programming}}}} \emph{(\bibinfo{series}{ICFP 2015})}.
  \bibinfo{publisher}{{ACM}}, \bibinfo{address}{New York, NY, USA},
  \bibinfo{pages}{2--9}.
\newblock
\showISBNx{978-1-4503-3669-7}
\urldef\tempurl%
\url{https://doi.org/10.1145/2784731.2784760}
\showDOI{\tempurl}


\bibitem[\protect\citeauthoryear{Shaikhha, Klonatos, Parreaux, Brown, Dashti,
  and Koch}{Shaikhha et~al\mbox{.}}{2016}]%
        {Shaikhha2016}
\bibfield{author}{\bibinfo{person}{Amir Shaikhha}, \bibinfo{person}{Yannis
  Klonatos}, \bibinfo{person}{Lionel Parreaux}, \bibinfo{person}{Lewis Brown},
  \bibinfo{person}{Mohammad Dashti}, {and} \bibinfo{person}{Christoph Koch}.}
  \bibinfo{year}{2016}\natexlab{}.
\newblock \showarticletitle{How to {{Architect}} a {{Query Compiler}}}. In
  \bibinfo{booktitle}{\emph{Proceedings of the 2016 {{International
  Conference}} on {{Management}} of {{Data}}}} \emph{(\bibinfo{series}{SIGMOD
  '16})}. \bibinfo{publisher}{{ACM}}, \bibinfo{address}{New York, NY, USA},
  \bibinfo{pages}{1907--1922}.
\newblock
\showISBNx{978-1-4503-3531-7}
\urldef\tempurl%
\url{https://doi.org/10.1145/2882903.2915244}
\showDOI{\tempurl}


\bibitem[\protect\citeauthoryear{Stonebraker}{Stonebraker}{1974}]%
        {Stonebraker74}
\bibfield{author}{\bibinfo{person}{Michael Stonebraker}.}
  \bibinfo{year}{1974}\natexlab{}.
\newblock \showarticletitle{The choice of partial inversions and combined
  indices}.
\newblock \bibinfo{journal}{\emph{International Journal of Parallel
  Programming}} \bibinfo{volume}{3}, \bibinfo{number}{2}
  (\bibinfo{year}{1974}), \bibinfo{pages}{167--188}.
\newblock
\urldef\tempurl%
\url{https://doi.org/10.1007/BF00976642}
\showDOI{\tempurl}


\bibitem[\protect\citeauthoryear{Stonebraker}{Stonebraker}{2012}]%
        {Stonebraker12}
\bibfield{author}{\bibinfo{person}{Michael Stonebraker}.}
  \bibinfo{year}{2012}\natexlab{}.
\newblock \showarticletitle{SciDB: An Open-Source {DBMS} for Scientific Data}.
\newblock \bibinfo{journal}{\emph{{ERCIM} News}} \bibinfo{volume}{2012},
  \bibinfo{number}{89} (\bibinfo{year}{2012}).
\newblock
\urldef\tempurl%
\url{http://ercim-news.ercim.eu/en89/special/scidb-an-open-source-dbms-for-scientific-data}
\showURL{%
\tempurl}


\bibitem[\protect\citeauthoryear{Stonebraker, Abadi, Batkin, Chen, Cherniack,
  Ferreira, Lau, Lin, Madden, O'Neil, and {others}}{Stonebraker
  et~al\mbox{.}}{2005}]%
        {Stonebraker2005a}
\bibfield{author}{\bibinfo{person}{Mike Stonebraker},
  \bibinfo{person}{Daniel~J. Abadi}, \bibinfo{person}{Adam Batkin},
  \bibinfo{person}{Xuedong Chen}, \bibinfo{person}{Mitch Cherniack},
  \bibinfo{person}{Miguel Ferreira}, \bibinfo{person}{Edmond Lau},
  \bibinfo{person}{Amerson Lin}, \bibinfo{person}{Sam Madden},
  \bibinfo{person}{Elizabeth O'Neil}, {and} \bibinfo{person}{{others}}.}
  \bibinfo{year}{2005}\natexlab{}.
\newblock \showarticletitle{C-Store: A Column-Oriented {{DBMS}}}. In
  \bibinfo{booktitle}{\emph{Proceedings of the 31st International Conference on
  {{Very}} Large Data Bases}}. \bibinfo{publisher}{{VLDB Endowment}},
  \bibinfo{pages}{553--564}.
\newblock
\urldef\tempurl%
\url{http://dl.acm.org/citation.cfm?id=1083658}
\showURL{%
\tempurl}


\bibitem[\protect\citeauthoryear{Sujeeth, Brown, Lee, Rompf, Chafi, Odersky,
  and Olukotun}{Sujeeth et~al\mbox{.}}{2014}]%
        {Sujeeth2014}
\bibfield{author}{\bibinfo{person}{Arvind~K. Sujeeth},
  \bibinfo{person}{Kevin~J. Brown}, \bibinfo{person}{Hyoukjoong Lee},
  \bibinfo{person}{Tiark Rompf}, \bibinfo{person}{Hassan Chafi},
  \bibinfo{person}{Martin Odersky}, {and} \bibinfo{person}{Kunle Olukotun}.}
  \bibinfo{year}{2014}\natexlab{}.
\newblock \showarticletitle{Delite: {{A Compiler Architecture}} for
  {{Performance}}-{{Oriented Embedded Domain}}-{{Specific Languages}}}.
\newblock \bibinfo{journal}{\emph{ACM Trans. Embed. Comput. Syst.}}
  \bibinfo{volume}{13}, \bibinfo{number}{4s} (\bibinfo{date}{April}
  \bibinfo{year}{2014}), \bibinfo{pages}{134:1--134:25}.
\newblock
\showISSN{1539-9087}
\urldef\tempurl%
\url{https://doi.org/10.1145/2584665}
\showDOI{\tempurl}


\bibitem[\protect\citeauthoryear{Tahboub, Essertel, and Rompf}{Tahboub
  et~al\mbox{.}}{2018}]%
        {Tahboub2018}
\bibfield{author}{\bibinfo{person}{Ruby~Y. Tahboub},
  \bibinfo{person}{Gr\'egory~M. Essertel}, {and} \bibinfo{person}{Tiark
  Rompf}.} \bibinfo{year}{2018}\natexlab{}.
\newblock \showarticletitle{How to {{Architect}} a {{Query Compiler}},
  {{Revisited}}}. In \bibinfo{booktitle}{\emph{Proceedings of the 2018
  {{International Conference}} on {{Management}} of {{Data}}}}.
  \bibinfo{publisher}{{ACM}}, \bibinfo{pages}{307--322}.
\newblock


\bibitem[\protect\citeauthoryear{Talebi, Chirkova, Fathi, and Stallmann}{Talebi
  et~al\mbox{.}}{2008}]%
        {Talebi2008}
\bibfield{author}{\bibinfo{person}{Zohreh~Asgharzadeh Talebi},
  \bibinfo{person}{Rada Chirkova}, \bibinfo{person}{Yahya Fathi}, {and}
  \bibinfo{person}{Matthias Stallmann}.} \bibinfo{year}{2008}\natexlab{}.
\newblock \showarticletitle{Exact and Inexact Methods for Selecting Views and
  Indexes for {{OLAP}} Performance Improvement}. In
  \bibinfo{booktitle}{\emph{Proceedings of the 11th International Conference on
  {{Extending}} Database Technology: {{Advances}} in Database Technology}}.
  \bibinfo{publisher}{{ACM}}, \bibinfo{pages}{311--322}.
\newblock
\urldef\tempurl%
\url{http://dl.acm.org/citation.cfm?id=1353383}
\showURL{%
\tempurl}


\bibitem[\protect\citeauthoryear{Visser}{Visser}{2005}]%
        {Visser2005}
\bibfield{author}{\bibinfo{person}{Eelco Visser}.}
  \bibinfo{year}{2005}\natexlab{}.
\newblock \showarticletitle{A Survey of Strategies in Rule-Based Program
  Transformation Systems}.
\newblock \bibinfo{journal}{\emph{Journal of Symbolic Computation}}
  \bibinfo{volume}{40}, \bibinfo{number}{1} (\bibinfo{date}{July}
  \bibinfo{year}{2005}), \bibinfo{pages}{831--873}.
\newblock
\showISSN{0747-7171}
\urldef\tempurl%
\url{https://doi.org/10.1016/j.jsc.2004.12.011}
\showDOI{\tempurl}


\bibitem[\protect\citeauthoryear{Yan and Cheung}{Yan and Cheung}{2019}]%
        {Yan2019}
\bibfield{author}{\bibinfo{person}{Cong Yan} {and} \bibinfo{person}{Alvin
  Cheung}.} \bibinfo{year}{2019}\natexlab{}.
\newblock \showarticletitle{Generating Application-Specific Data Layouts for
  in-Memory Databases}.
\newblock \bibinfo{journal}{\emph{Proceedings of the VLDB Endowment}}
  \bibinfo{volume}{12}, \bibinfo{number}{11} (\bibinfo{year}{2019}),
  \bibinfo{pages}{1513--1525}.
\newblock


\bibitem[\protect\citeauthoryear{Yessenov, Kuraj, and {Solar-Lezama}}{Yessenov
  et~al\mbox{.}}{2017}]%
        {Yessenov2017}
\bibfield{author}{\bibinfo{person}{Kuat Yessenov}, \bibinfo{person}{Ivan
  Kuraj}, {and} \bibinfo{person}{Armando {Solar-Lezama}}.}
  \bibinfo{year}{2017}\natexlab{}.
\newblock \showarticletitle{{{DemoMatch}}: {{API}} Discovery from
  Demonstrations}. In \bibinfo{booktitle}{\emph{{{PLDI}}}}.
  \bibinfo{publisher}{{ACM}}, \bibinfo{address}{Barcelona, Spain},
  \bibinfo{pages}{15}.
\newblock
\urldef\tempurl%
\url{https://doi.org/10.1145/3062341.3062386}
\showDOI{\tempurl}


\end{thebibliography}


\clearpage
\appendix
\section{Layout Semantics}\label{sec:layout-semantics}

In this section we discuss the layout semantics, which specifies how a layout
algebra program may be serialized to a binary format.

Each of the layout operators has a serialization format that is designed to be
as compact as possible. These are as follows:
\begin{itemize}
\item Integers are stored using the minimum number of bytes, from 1 to 8 bytes.
\item Booleans are stored as single bytes.
\item Fixed point numbers are normalized to a fixed scale, and stored as
  integers.
\item Tuples are stored as the concatenation of the layouts they contain,
  prefixed by a length.
\item Lists are stored as a length followed by the concatenation of their
  elements. They can be efficiently scanned through, but not accessed randomly
  by index.
\item Hash indexes are implemented using minimal perfect
  hashes~\cite{Botelho2007,DavideCastroReis2011}. The hash values are stored as
  in a list, but during serialization a lookup table is generated using the CMPH
  library and stored before the values. Using perfect hashing allows the hash
  indexes to have load factors up to 99\%.
\item Ordered indexes are similar to hash indexes in that they store a lookup
  table in addition to storing the values. In the case of the ordered index keys
  are stored sorted and the correct range is found by binary search.
\end{itemize}

Serialization proceeds in two passes. First, we compute a layout type according
to the rules in \Cref{fig:layout-types}. This type is an abstraction of the
layout; we use it to specialize the layout to the data that it stores. For
example, we use an interval abstraction to represent integer scalars as well as
lengths of collections like lists. We use these intervals to choose the number
of bytes to use for these integers when serializing the layout. In some cases we
are able to use this abstraction to avoid storing anything at all. For example,
if we know that the length field of a tuple is always the same, we can avoid
storing that field and instead bake it into the layout reading code. This is a
surprisingly important optimization; for a tuple containing two integers, a
naive implementation would spend a third of the tuple's bytes just to store the
length field.

After computing the layout type, we serialize the layout according to the rules
in \Cref{fig:layout-semantics}.

\begin{figure}
  {\small
    \begin{alignat*}{3}
      n &::=&&\ \mathbb{Z} \quad r ::=\ [n, n] \\
      t &::=&&\ \textsf{intT}(r) ~|~ \textsf{boolT} ~|~ \textsf{fixedT}(r,
      n_{scale}) ~|~ \textsf{stringT}(r_{chars}) \\
      &\ccol{|}&&\ \textsf{tupleT}([t_1, \dots, t_k]) ~|~ \textsf{listT}(t, n_{elems}) ~|~\textsf{hash-idxT}(t_k, t_v) \\
      &\ccol{|}&&\  \textsf{ordered-idxT}(t_k, t_v) ~|~ \textsf{emptyT}
    \end{alignat*}
  } 
  \vspace*{-4ex}
  \caption{Syntax of the layout types.}\label{fig:layout-types}
\end{figure}


\begin{figure}
{\small
  \begin{gather*}
    \begin{aligned}
    \inference{
      \sigma |- \scalar{e \mapsto n} \evalto x \\ x\ \text{is an integer} 
    }{
      \sigma |- \scalar{e \mapsto n} : \textsf{intT}([x, x])
    } &
    \inference{
      \sigma |- q_1 : t_1, \dots, \sigma |- q_k : t_k \\ t = \textsf{tupleT}([t_1,
      \dots, t_k], \tau)
    }{
      \sigma |- \atuple{\tau}([q_1, \dots, q_k]) : t
    }
  \end{aligned} \\
  \begin{aligned}
    \inference{
     \sigma |- q_k \evalto r \\ t_k = \bigsqcup_{\sigma' \in r, \sigma' |- q_v :
       t'} t' \\ t = \textsf{listT}(t, [ |r|, |r| ])
    }{
      \alist{}(\as{q_k}{n}, q_v) : t
    } &
    \inference{
      t_1 = \textsf{intT}([q_1, h_1]) & t_2 = \textsf{intT}([q_2, h_2])
    }{
      t_1 \sqcup t_2 = \textsf{intT}([\min (q_1, q_2), \max (h_1, h_2)])
    }
  \end{aligned}
\end{gather*}
} 
 \vspace*{-3ex}  
  \caption{Selected semantics of the type inference pass.}\label{fig:layout-infer}
\vspace*{-3ex}    
\end{figure}


\begin{figure}
{\small
  \begin{gather*}
    b : Byte\ string\quad\sigma, t : Tuple\quad\delta : Id \mapsto Relation\\
    \inference{}{\sigma, \delta |- \emptyset \layoutto ""}\quad
    \inference{
      \sigma, \delta |- e \evalto v & b\ \text{is the binary format of}\ v
    }{
      \sigma, \delta |- \scalar{e} \layoutto b
    }\\[2ex]
    \inference{
      \sigma, \delta |- q_k \evalto [t_1, \dots, t_n] & \forall 1\leq i \leq n.\
      \sigma \cup t_i, \delta |- q_v \layoutto b_i \\
      \sigma, \delta |- \scalar{|b_1| + \cdots + |b_n|} \layoutto b_{len} & \sigma, \delta |- \scalar{n} \layoutto b_{ct} 
    }{
      \sigma, \delta |- \alist{}(q_k, q_v) \layoutto b_{ct}b_{len}b_1\dots b_n 
    }\\[2ex]
    \inference{
      \forall 1 \leq i \leq n.\ \sigma, \delta |- q_i \layoutto b_i &
      \sigma, \delta |- \scalar{|b_1| + \cdots + |b_n|} \layoutto b_{len} 
    }{
      \sigma, \delta |- \atuple{\tau}([q_1, \dots, q_n]) \layoutto b_{len}b_1\dots b_n
    }\\[2ex]
    \inference{
      \sigma, \delta |- q \layoutto b 
    }{
      \sigma, \delta |- \filter{}(e, q) \layoutto b
    }\quad
    \inference{
      \sigma, \delta |- q \layoutto b 
    }{
      \sigma, \delta |- \select{}(e, q) \layoutto b
    }
  \end{gather*}
} 
 \vspace*{-3ex}    
  \caption{Selected semantics of the layout serialization pass.}\label{fig:layout-semantics}
\end{figure}



\section{Runtime Semantics}\label{sec:runtime-semantics}

In this section we describe how layout algebra programs are compiled to
executable code and how that compilation process uses the layout type.

As mentioned in \Cref{sec:compiler}, query code is generated as
in~\cite{Tahboub2018}. This method is referred to as push-based, or data-centric
query evaluation. For each query operator, the code generator contains a
function that emits the code that implements the operator. Rather than emitting
an iterator which can be stepped forward at runtime, these functions take a
callback which generates the code that consumes the output of the operator. This
compilation strategy has the effect of inlining the operator implementations
into a single loop nest. We found that using this strategy instead of a
traditional iterator model approach is critical for getting good performance
from the generated code.

The drawback of push-based query evaluation is that certain operators, such as
deduplication and ordering, must buffer their inputs before processing them.
Rather than implement buffering, we restrict the use of these operators and
replace them with layout-based implementations wherever possible.

For each of the layout operators, we generate code that reads the layout
generated by the serialization pass. This process uses the layout type to
determine where to specialize the layout reading code. Essentially, for each
layout specialization in \Cref{sec:layout-semantics}, there is a corresponding
specialization of the generated code.


\section{Correctness of Filter Elimination}\label{sec:partition-proof}

In this section we discuss the correctness of the filter elimination rule
(\Cref{sec:xform-part}) in detail. We show that the relational semantics is
sufficiently detailed to prove the correctness of the transformation rules.

We say that two programs $q_1$ and $q_2$ are equivalent if they produce the same
value in every context and we denote equivalence as $q_1 \equiv q_2$ according
to the following rule:
\[
  \inference[Equiv]{
    \forall \sigma, \delta, s.~
    \sigma, \delta |- q \evalto s \iff  \sigma, \delta |- q' \evalto s
  }{
    q \equiv q'
  }
\]

Now we prove that the filter elimination rule is semantics-preserving:
\[
  \inference{
    x\ \text{is fresh} & \partition{q}{e}{x} = (q_k, q_v)
  }{
    \textsf{filter}(e = e', q) -> \hidx{}(\as{q_k}{x}, q_v, e')
  }.
\]

\begin{theorem}
  If $\partition{q}{e}{x} = (q_k, q_v)$ and $x$ is a fresh scope,
  then \[\textsf{filter}(e = e', q) \equiv \hidx{}(\as{q_k}{x}, q_v, e').\]
\end{theorem}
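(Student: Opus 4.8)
The plan is to prove equivalence by unfolding both sides according to the relational semantics given in Figure~\ref{fig:runtime-semantics} and showing that they evaluate to the same relation in an arbitrary context $\sigma, \delta$. By the \textsc{Equiv} rule, it suffices to show that for all $\sigma, \delta, s$ we have $\sigma, \delta \vdash \filter{}(e = e', q) \evalto s$ if and only if $\sigma, \delta \vdash \hidx{}(\as{q_k}{x}, q_v, e') \evalto s$. Since the semantics is deterministic, I would establish this by computing the unique $s$ each side produces and checking they coincide.

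First I would unfold the left-hand side using the \filter{} rule \eqref{eq:filter}: if $\sigma, \delta \vdash q \evalto r_q$, then $\filter{}(e = e', q)$ evaluates to the subsequence $[t \mid t \leftarrow r_q,\ \sigma \cup t, \delta \vdash (e = e') \evalto \textsf{true}]$, i.e.\ exactly those tuples $t$ of $r_q$ for which $e$ evaluated on $t$ equals $e'$ evaluated in $\sigma$. Next I would unfold the right-hand side. The excerpt does not give an explicit rule for \hidx{}, but it states that \hidx{} is analogous to \oidx{} and that the semantics of the layout operators are defined via \depjoin{}; mirroring the \oidx{} rule \eqref{eq:oidx}, the natural reading is that $\hidx{}(\as{q_k}{x}, q_v, e')$ evaluates like $\depjoin{}(\as{q_k}{x}, \filter{}(x.e = e', q_v))$ restricted to the key matching $e'$. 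I would substitute the definition of $\partition{q}{e}{x} = (q_k, q_v)$, namely $q_k = \dedup{}(\select{}(e, q))$ and $q_v = \filter{}(x.e = e, q)$, and unfold using the \depjoin{} rule \eqref{eq:depjoin}.

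The crux of the argument is a bijection between the tuples surviving on each side. On the right, \depjoin{} iterates over the deduplicated key values $v$ of $e$ in $q$; for the value $v$ equal to $e'$, the inner filter $\filter{}(x.e = e', q_v) = \filter{}(x.e = e', \filter{}(x.e = e, q))$ returns exactly the tuples of $q$ whose $e$-value is $v = e'$, while for every other key value the inner filter is empty because $x.e = e'$ fails. Thus the concatenation collapses to precisely the tuples of $q$ with $e$-value equal to $e'$, which is exactly the left-hand-side result. I would need to track the scope qualification carefully: the \depjoin{} rule rebinds the key fields under scope $s = x$, and the freshness of $x$ guarantees no name capture, so that $x.e$ on the right refers unambiguously to the key value and the schema condition $\textsc{schema}(q_k) = [e]$ (from the projection in $q_k$) lines up.

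The main obstacle I expect is the treatment of ordering and duplicates inherent to the sequence-based (bag) semantics, together with pinning down the precise \hidx{} evaluation rule since it is only given by analogy in the excerpt. Because relations are ordered sequences, I must verify not just that the two sides contain the same multiset of tuples but that they appear in the same order; this requires arguing that the \dedup{} in $q_k$ combined with the single matching key value preserves the original relative order of the tuples in $q$, rather than reordering them by key. I would handle this by observing that for the unique key $v = e'$, the inner \depjoin{} body yields the matching tuples of $q$ in their original $q$-order, and all non-matching keys contribute the empty sequence, so no interleaving occurs and the output order matches $\filter{}$ exactly. Establishing that the informal ``similar to \oidx{}'' rule for \hidx{} indeed reduces to this \depjoin{}-with-point-lookup form is the key step that makes the whole equivalence go through.
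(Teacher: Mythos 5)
Your proposal is correct and follows essentially the same route as the paper's proof: unfold \hidx{} via its \depjoin{}-based rule (R-HI in the appendix), substitute the definition of $\partition{q}{e}{x}$, merge the two filters, and use the semantics of \dedup{} to argue that the matching key contributes the filtered tuples exactly once while every other key contributes the empty sequence. Your explicit attention to preserving the sequence order is a small refinement the paper leaves implicit, but it does not change the structure of the argument.
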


\begin{proof}
  By Equiv, the right-hand-side of this implication is equivalent to:
  \begin{align*}
    &\forall \sigma, \delta, s.~\sigma, \delta |-
    \textsf{filter}(e = e', q) \evalto s \iff \\
    &\quad\sigma, \delta |- \hidx{}(\as{q_k}{x}, q_v, e') \evalto s.
  \end{align*}

  By R-HI,
  \begin{align*}
    &\sigma, \delta |- \textsf{filter}(e = e', q) \evalto s \iff \\
    &\quad\sigma, \delta |- \depjoin{}(\as{q_k}{x}, \filter{}(x.e = e', q_v)) \evalto s.
  \end{align*}
           
By the definition of \textsc{partition}, $q_k = \dedup{}(\select{}(\{e\}, q))$
and $q_v = \filter{}(x.e = e, q)$, so
\begin{align*}
  &\sigma, \delta |- \textsf{filter}(e = e', q) \evalto s \iff \\
  &\quad\sigma, \delta |- \depjoin{}(\as{\dedup{}(\select{}(\{e\}, q))}{x},\\
  &\quad\quad\filter{}(x.e = e', \filter{}(x.e = e, q))) \evalto s.
\end{align*}

We can simplify the filter operators to get:
\begin{align*}
  &\sigma, \delta |- \textsf{filter}(e = e', q) \evalto s \iff \\
  &\quad\sigma, \delta |- \depjoin{}(\as{\dedup{}(\select{}(\{e\}, q))}{x},\\
  &\quad\quad\filter{}(x.e = e' \land x.e = e, q)) \evalto s.
\end{align*}
Proving the correctness of this simplification is straightforward and does not
rely on the correctness of the hash-index introduction rule.

By R-Filter and R-Depjoin (and some abuse of notation), this is equivalent to:
\[
  [t ~|~ t <- \textsf{filter}(e = e', q)] = \left[t' ~\Bigg|~
    {\begin{array}{c}
       t <- \dedup{}(\select{}(\{e\}, q)) \\
       t' <- \filter{}(t = e' \land t = e, q)
     \end{array}}
 \right].
\]

At this point there are two cases of interest. In the first case, assume that
$e' \in \dedup{}(\select{}(\{e\}, q))$. By the semantics of \dedup{}, $e'$ will
appear exactly once in this query result if it appears at all. We can conclude
that
\begin{align*}
  &\left[t' ~\Bigg|~
  {\begin{array}{c}
     t <- \dedup{}(\select{}(\{e\}, q)) \\
     t' <- \filter{}(t = e' \land t = e, q)
   \end{array}}
  \right] \\
  &\quad= \begin{array}{c}
       [t' ~|~ t' <- \filter(e' = e' \land e' = e, q)] \concat{} \\
       \quad[t' ~|~ t\neq e', t' <- \filter(e' = t \land t = e, q)]
     \end{array} \\
  &\quad= [t' ~|~ t' <- \filter(e' = e' \land e' = e, q)] \concat{} [~] \\
  &\quad= [t' ~|~ t' <- \filter(e' = e, q)]
\end{align*}

In the second case, assume that $e' \not \in \dedup{}(\select{}(\{e\}, q))$. In
this case, there is no $e$ such that $e = e'$, so $\textsf{filter}(e = e', q) =
[~]$. Similarly, there is no $t$ such that $t = e'$, so \[\left[t' ~\Bigg|~
  {\begin{array}{c}
     t <- \dedup{}(\select{}(\{e\}, q)) \\
     t' <- \filter{}(t = e' \land t = e, q)
   \end{array}}
\right] = [~].\]

In both cases, the two programs are equivalent, so we can conclude that the rule
is semantics-preserving.
\end{proof}

We can conclude from this proof that showing correctness for the transformation
rules is feasible.


\begin{figure}
  {\small
    \begin{alignat*}{3}
      S &::=&&\ \{x_1 \mapsto e_1, \dots, x_m \mapsto e_m\} \quad
      T ::= [C,\ \dots,\ q_n] ~|~ \dots ~|~ [q_1,\ \dots,\ C] \\
      C &::=&&\ [\cdot] ~|~ \textsf{select}(S, C) ~|~ \textsf{filter}(e, C) ~|~
      \textsf{join}(e, q, C) ~|~ \textsf{group-by}(S, E, C) \\
      &\ccol{|}&&\ \textsf{dedup}(C) ~|~ \alist{}(\as{q}{x}, C) ~|~
      \alist{}(\as{C}{x}, q) ~|~ \atuple{\tau}(T) \\
      &\ccol{|}&&\ \hidx{}(\as{C}{x}, q_v, t_k) ~|~ \hidx{}(\as{q_k}{x}, C, t_k) \\ 
      &\ccol{|}&&\ \oidx{}(\as{C}{x}, q_v, t_{lo}, t_{hi}) ~|~ \oidx{}(\as{q_k}{x}, C, t_{lo}, t_{hi})
    \end{alignat*}
  } 
  \caption{The grammar of contexts.}\label{fig:contexts}
\end{figure}

\begin{figure*}
  \begin{small}
  \begin{gather*}
    Id = (Scope?, Name) \quad
    Context = Tuple = \{Id \mapsto Value\} \quad
    Relation = [Tuple] \\
    \sigma : Context \quad
    \delta : Id \mapsto Relation \quad
    s : Id \quad
    t : Tuple \quad
    v : Value \quad
    r : Relation
    \\[2ex]
    \inference[Ctx-Union]{
      \forall n.\ n \not \in \sigma \lor n \not \in \sigma' \\
      \sigma'' = \{n_i \mapsto v_i ~|~ \sigma[n_i] = v_i \lor \sigma'[n_i] =
      v_i\}
    }{
      \sigma \cup \sigma' = \sigma''
    }\quad
    \inference[E-Record]{
      t = \{n_1 \mapsto e_1, \dots, n_m \mapsto e_m\} \\
      \forall i.\ \eto{e_i}{v_i} \\
      t' = \{n_1 \mapsto v_1, \dots, n_m \mapsto v_m\}
    }{
      \eto{t}{t'}
    }\quad\inference[R-Relation]{
      \delta[n] = r
    }{
      \eto{\textsf{relation}(n)}{r}
    }\\[2ex]
    \quad
    \inference[R-Empty]{}{\sigma, \delta |- \emptyset \evalto [\ ]}\quad
    \inference[R-Scalar]{
      \eto{e}{v}
    }{
      \eto{\scalar{n \mapsto e}}{[\{n \mapsto v\}]}
    }\quad\inference[R-Join]{
      \eto{q}{r} & \eto{q'}{r'}  \\
      s = \left[t \cup t' ~|~
        t <- r \quad t' <- r' \quad \eto[\sigma \cup t \cup t', \delta]{e}{\textsf{true}}
      \right]
    }{
      \eto{\textsf{join}(e, q, q')}{r}
    }\\[2ex]
    \inference[R-Filter]{
      \eto{q}{r_q} \\
      r = [t ~|~ t <- r_q \quad \eto[\sigma \cup t, \delta]{e}{\textsf{true}}]
    }{
      \eto{\textsf{filter}(e, q)}{r}
    }\quad
    \inference[R-Select]{
      t\ \text{contains no aggregates} & \eto{q}{r_q} \\
      r = [t'' ~|~ t' <- r_q \quad \eto[\sigma \cup t', \delta]{t}{t''}]
    }{
      \eto{\textsf{select}(t, q)}{r}
    }\\[2ex]
    \inference[R-Dedup]{
      \eto{q}{r_q} & \forall t \in r.\ t \in r_q \\
      \forall t \in r_q.\ \exists i. 1 \leq
      i \leq |r| \land r[i] = t \land \forall j.\ j = i \lor t \neq r[j] 
    }{
      \eto{\textsf{dedup}(q)}{r}
    }\quad\inference[R-DepJoin]{
      \eto{q}{r} &
      r'' = \left[t' ~\Bigg|~
        {\begin{array}{c}
           t <- r \quad t' <- r' \\ t_s = \{(s, f) \mapsto v ~|~
           (f \mapsto v) \in t\} \\ \eto[\sigma \cup t_s, \delta]{q'}{r'}
         \end{array}}
     \right]
   }{
     \eto{\textsf{depjoin}(\as{q}{s}, q')}{r''}
   }\\[2ex]      
   \inference[R-T1]{}{\eto{\atuple{\tau}([~])}{[~]}}\quad
   \inference[R-T2]{
     \tau = \textsf{cross} &
     \eto{q_1}{r_q} \\ \eto{\atuple{\tau}([q_2, \dots, q_n])}{r_{qs}} \\
     r = [t \cup ts ~|~ t <- r_{q}, ts <- r_{qs}]
   }{
     \eto{\atuple{\tau}([q_1, \dots, q_n])}{s}
   }\quad
   \inference[R-T3]{
     \tau = \textsf{concat} \\ \eto{q_1}{r_q} \\ \eto{\atuple{\tau}([q_2, \dots, q_n])}{r_{qs}} 
   }{
     \eto{\atuple{\tau}([q_1, \dots, q_n])}{r_q \concat r_{qs}} 
   }\\[2ex]
   \inference[R-List]{
     \eto{\textsf{depjoin}(\as{q_r}{s}, q)}{r}
   }{
     \eto{\alist{}(\as{q_r}{s}, q)}{r}
   }\enspace
   \inference[R-HI]{
     \eto{\textsf{depjoin}(\as{q_k}{s}, \filter{}(s.x = l, q_v))}{r} \\
     \textsc{schema}(q_k) = [x]
   }{
     \eto{\hidx{}(\as{q_k}{s}, q_v, l)}{r}
   }\enspace
   \inference[R-OI]{
     \eto{\textsf{depjoin}(\as{q_k}{s}, \filter{}(l_{lo}\leq s.x \leq l_{hi},
       q_v))}{r} \\
     \textsc{schema}(q_k) = [x]
   }{
     \eto{\oidx{}(\as{q_k}{s}, q_v, l_{lo}, l_{hi})}{r}
   }
 \end{gather*}
\end{small}
\caption{Execution semantics of the \langname.}\label{fig:runtime-semantics-full}
\end{figure*}


\begin{figure}
{\small
  \begin{align*}
    &\hidx{}(\select{}(\{id_p, id_c\},\\
    &\quad\join{}(enter_p < enter_c \land enter_c < exit_p,\\
    &\quad\quad\select{}(\{id \mapsto id_p, enter \mapsto enter_p, exit \mapsto exit_p\}, log), \\
    &\quad\quad\select{}(\{id \mapsto id_c, enter \mapsto enter_c\}, log)))\ \textsf{as}\ h,\\
    &\quad\alist{}(\select{}(\{enter_p, enter_c\},\\ 
    &\quad\quad\join{}(enter_p < enter_c \land enter_c < exit_p \land id_c = h.id_c \land id_p = h.id_p,\\
    &\quad\quad\quad\select{}(\{id \mapsto id_p, enter \mapsto enter_p, exit \mapsto exit_p\}, log),\\
    &\quad\quad\quad\select{}(\{id \mapsto id_c, enter \mapsto enter_c\}, log)))\ \textsf{as}\ l,\\
    &\quad\quad\ctuple{[\scalar{l.enter_p}, \scalar{l.enter_c}]}),\\
    &\quad(\$pid, \$cid))
  \end{align*}
}\vspace{-3ex}
\caption{Hash-index layout.}\label{fig:hash-code}
\end{figure}

\begin{table*}
  \vspace*{-2ex}
  \begin{threeparttable}
\begin{tabular}{lrrrrrrrrrrr}
\toprule

& \multicolumn{4}{c}{\hyper{}} & \multicolumn{3}{c}{\projname{} (Expert)} & \multicolumn{3}{c}{\projname{} (Optimizer)} \\
\cmidrule(lr){2-5}\cmidrule(lr){6-8}\cmidrule(lr){9-11}
Q\# & Time\tnote{3} & Time\tnote{2} & Mem. & Size & Time\tnote{2} & Mem. & Size & Time\tnote{2} & Mem. & Size \\
\midrule
1  & 19.00 & 0.12 & 15.2 & 17.8 & 0.04 & 2.3 & 0.1 & 0.03 & 2.2 & 0.1 \\
2 \tnote{0} & 5.00 & 10.52 & 238.0 & 206.6 & 0.71 & 10.4 & 41.8 & 0.08 & 2.9 & 102.8 \\
3 \tnote{0} \tnote{1} & 17.00 & 22.30 & 877.1 & 966.8 & 4.73 & 18.3 & 81.0 & 4.98 & 18.4 & 81.0 \\
4  & 8.00 & 0.04 & 17.0 & 17.8 & <0.01 & 2.4 & 0.2 & <0.01 & 2.2 & 0.2 \\
5  & 11.00 & 3.87 & 26.0 & 24.1 & <0.01 & 2.5 & 1.2 & 0.04 & 3.0 & 3.8 \\
6  & 12.00 & 11.75 & 900.5 & 858.8 & 5.69 & 7.2 & 31.0 & --- & --- & --- \\
7  & 12.00 & 0.01 & 16.3 & 17.8 & <0.01 & 2.3 & <0.1 & 0.34 & 2.5 & 0.2 \\
8  & 7.00 & 17.35 & 529.5 & 375.4 & 7.33 & 15.4 & 67.1 & 0.02 & 2.5 & 29.4 \\
9  & 31.00 & 40.60 & 1580.2 & 1550.8 & 54.08 & 358.6 & 365.0 & 241.00 & 317.6 & 323.0 \\
10 \tnote{0} \tnote{1} & 17.00 & <0.01 & 116.8 & 112.2 & 7.82 & 34.5 & 33.1 & 38.10 & 227.2 & 230.6 \\
11  & 8.00 & 16.07 & 89.5 & 68.2 & 0.15 & 2.6 & 5.3 & --- & --- & --- \\
12  & 8.00 & 0.08 & 15.4 & 17.8 & <0.01 & 2.3 & 0.2 & 0.24 & 3.8 & 148.3 \\
14  & 4.00 & <0.01 & 16.0 & 17.8 & <0.01 & 2.2 & <0.1 & <0.01 & 2.3 & <0.1 \\
15  & 13.00 & <0.01 & 14.7 & 17.8 & <0.01 & 2.2 & 0.2 & --- & --- & --- \\
16 \tnote{1} & 38.00 & 3.19 & 32.7 & 35.7 & 1.01 & 5.8 & 3.9 & --- & --- & --- \\
17  & 11.00 & 7.43 & 1238.2 & 1224.7 & 0.04 & 5.0 & 50.8 & <0.01 & 2.4 & 1.6 \\
18 \tnote{0} & 42.00 & 12.47 & 388.7 & 295.7 & 39.18 & 73.9 & 73.5 & --- & --- & --- \\
19  & 28.00 & 0.02 & 19.5 & 17.8 & 0.10 & 2.4 & 0.2 & --- & --- & --- \\
\bottomrule
\end{tabular}
$^0$~Limit clause removed.
$^1$~Run time ordering removed.
$^2$~Specialized.
$^3$~Unspecialized.
\end{threeparttable}

  \caption{Runtime of queries derived from TPC-H (ms). Memory use is the peak
    resident set size during a query (Mb). Size is the layout size (Mb).}\label{tbl:tpch}
  \vspace*{-3ex}
\end{table*}


\end{document}